\documentclass[10pt,journal]{IEEEtran}

\usepackage{amsmath, amsfonts, amssymb, amsthm, mathtools}
\usepackage{algorithm, algpseudocode, algorithmicx}

\usepackage{graphicx, pgfplots, tikz, smartdiagram, qcircuit}
\pgfplotsset{compat=1.16}
\usetikzlibrary{calc, positioning, external}
\usesmartdiagramlibrary{additions}

\usepackage{subfig, multirow, xcolor, hyperref, textcomp, url, stfloats, array}
\usepackage{orcidlink}

\usepackage[style=ieee,backend=biber,doi=false,url=false,eprint=false]{biblatex}
\addbibresource{ref.bib}

\newcommand{\revision}[1]{\textcolor{black}{#1}}

\renewcommand{\epsilon}{\varepsilon}
\newcommand{\bigO}[1]{\mathcal{O}(#1)}

\theoremstyle{plain}
\newtheorem{thm}{Theorem}
\newtheorem{lem}{Lemma}
\newtheorem{definition}{Definition}
\newtheorem{example}{Example}

\newtheorem{prob}{Problem}


\newcommand{\tr}{\operatorname{tr}}
\newcommand{\bra}[1]{\langle #1 |}
\newcommand{\ket}[1]{| #1 \rangle}

\newcommand{\ketbra}[2]{| #1 \rangle \langle #2 |}
\newcommand{\e}{\ensuremath{\mathcal{E}}}
\newcommand{\h}{\ensuremath{\mathcal{H}}}
\renewcommand{\k}{\ensuremath{\mathcal{K}}}
\newcommand{\N}{\ensuremath{\mathcal{N}}}
\renewcommand{\u}{\ensuremath{\mathcal{U}}}



\tikzset{every picture/.append style={remember picture}}
\tikzset{tensor/.style={rectangle, inner sep=2pt,draw,minimum size=0.5cm}}



\newcommand{\tikzinput}[2]{\includegraphics{figures/tikz/#1.pdf}}


\makeatletter
\newcommand{\linebreakand}{%
\end{@IEEEauthorhalign}
\hfill\mbox{}\par
\mbox{}\hfill\hspace{7pt}\begin{@IEEEauthorhalign}
}
\makeatother

\hyphenation{op-tical net-works semi-conduc-tor IEEE-Xplore}

\begin{document}

\title{Approximation Methods for Simulation and Equivalence Checking of Noisy Quantum Circuits}

\author{Mingyu Huang\,\orcidlink{0009-0000-3219-1380}, Ji Guan\,\orcidlink{0000-0002-3490-0029}, Wang Fang\,\orcidlink{0000-0001-7628-1185}, and Mingsheng Ying\,\orcidlink{0000-0003-4847-702X}
	\thanks{Mingyu Huang is with Key Laboratory of System Software (Chinese Academy of Sciences) and State Key Laboratory of Computer Science, Institute of Software, Chinese Academy of Sciences, and University of Chinese Academy of Sciences, Beijing 101408, China (e-mail: huangmy@ios.ac.cn). }
	\thanks{Ji Guan is with Key Laboratory of System Software (Chinese Academy of Sciences) and State Key Laboratory of Computer Science, Institute of Software, Chinese Academy of Sciences, Beijing 101408, China (e-mail: guanj@ios.ac.cn). }
	\thanks{Wang Fang is with the School of Informatics, University of Edinburgh, Edinburgh EH8 9UR, United Kingdom (e-mail: wang.fang@ed.ac.uk).}
	\thanks{Mingsheng Ying is with Centre for Quantum Software and Information, University of Technology Sydney, Ultimo NSW 2007, Australia (e-mail: Mingsheng.Ying@uts.edu.au).}
}

\maketitle

\begin{abstract}
	In the current NISQ (Noisy Intermediate-Scale Quantum) era, simulating and verifying noisy quantum circuits is crucial but faces challenges such as quantum state explosion and complex noise representations, constraining simulation and equivalence checking to circuits with a limited number of qubits. This paper introduces an approximation algorithm for simulating and assessing the equivalence of noisy quantum circuits, specifically designed to improve scalability under low-noise conditions. The approach utilizes a novel tensor network diagram combined with singular value decomposition to approximate the tensors of quantum noises. The implementation is based on Google's TensorNetwork Python package for contraction. Experimental results on realistic quantum circuits with realistic hardware noise models indicate that our algorithm can simulate and check the equivalence of QAOA (Quantum Approximate Optimization Algorithm) circuits with around 200 qubits and 20 noise operators, outperforming state-of-the-art approaches in scalability and speed.
\end{abstract}

\begin{IEEEkeywords}
	Quantum circuits, noisy simulation, equivalence checking, approximation algorithm, tensor network
\end{IEEEkeywords}

\section{Introduction}\label{sec:intro}

\IEEEPARstart{S}{ince} achieving quantum supremacy over classical computing \cite{arute2019quantum}, quantum processors with an increasing number of qubits have been manufactured by leading organizations such as Google \cite{ai2024quantum} and IBM \cite{ibm2022quantum, gambetta2023hardware}. This progress marks the transition into the NISQ (Noisy Intermediate-Scale Quantum) era \cite{preskill2018quantum}, which is characterized by devices with an intermediate number of qubits—typically tens to hundreds—that are still susceptible to significant noise. Despite these limitations, NISQ processors have already been used in a variety of applications, demonstrating their potential benefits in fields like optimization, chemistry, material science, and machine learning~\cite{49058, doi:10.1126/science.abb9811, 46227}. At the core of these quantum systems lies the quantum circuit, which represents the essential framework for quantum computation. The design and execution of quantum circuits enable quantum computers to perform calculations that would otherwise be impossible with classical systems. As quantum hardware advances, optimizing the structure, execution, and error resilience of quantum circuits will be crucial to unlocking the full potential of quantum computing. In this context, simulating and performing equivalence checking of quantum circuits have become vital steps for ensuring the correctness and efficiency of quantum computations.

\subsection{Noisy Simulation}
In classical computation, the complexity of testing and verifying digital circuits has necessitated the development of a comprehensive tool chain designed to assist engineers in the rigorous process of ensuring circuit reliability and functionality. Central to this workflow is fault simulation, a critical technique used to model and analyze potential errors in circuit operations. Fault simulation plays a pivotal role in identifying and understanding how various fault models, such as stuck-at faults or bridging faults, can affect the behavior of a circuit. The workflow typically begins with generating a set of specific inputs (test patterns) to detect these faults. These patterns are then applied to a fault simulator, which assesses the circuit's response to each possible fault~\cite{wang2006vlsi}. The results from fault simulation inform the subsequent stages of the workflow, including design for testability (DFT) strategies and automatic test pattern generation (ATPG) as illustrated in Fig.~\ref{fig:workflow}.

\begin{figure}[ht]
	\centering
	\includegraphics[]{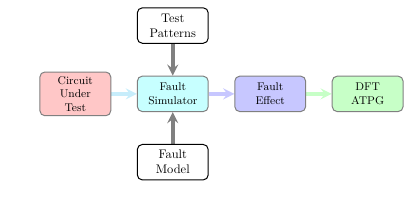}
	\caption{Test Workflow of Classical Circuit}
	\label{fig:workflow}
\end{figure}

In the quantum era, the methodology for testing and verifying circuits is fundamentally redefined, adapting to the unique properties of quantum systems. Quantum testing differs from classical: circuits use unitary gates on qubits (exponential state space), noise models vary by hardware (e.g., decoherence in superconductors), and test patterns are noisy state-preparation circuits. These differences fundamentally distinguish simulating a noisy quantum circuit from its classical counterpart. Therefore, efficient simulation algorithms for noisy quantum circuits need to be developed urgently.

\subsection{Approximate Equivalence Checking}
Equivalence checking plays a pivotal role in electronic design automation (EDA), particularly in verifying the correctness of circuits during the design process. This method is essential in ensuring that a circuit performs its intended function accurately after undergoing various design and optimization stages. By comparing the behavior of a circuit at different levels of abstraction, such as between a high-level functional specification and a lower-level implementation (like Register Transfer Level, or RTL), equivalence checking guarantees that any modifications, enhancements, or optimizations do not inadvertently alter the circuit's fundamental operation. This verification process is indispensable in the development of reliable and efficient electronic systems, as it helps to identify and rectify errors early in the design phase, thereby saving time, reducing costs, and ensuring the final product meets quality and performance standards~\cite{wang2006vlsi}.

Unlike classical circuits, which are characterized by different levels of abstraction, current quantum circuits do not yet have such distinct layers. Nevertheless, the presence of noise in realistic quantum circuits necessitates the evaluation of their performance under specific noise models. Assessing noise impact helps determine if noisy circuits approximate ideal performance. It is particularly valuable to predict the extent to which a real-world quantum circuit, affected by noise, can align with the performance of its ideal, noiseless counterpart. This assessment is key to advancing quantum computing, as it helps design more robust quantum circuits capable of withstanding the practical challenges posed by environmental interference and system imperfections.

\subsection{Contributions of This Paper}
Despite significant advancements in the field of design automation for quantum computing, challenges like the quantum state explosion problem persist. Current simulation and equivalence checking capabilities are limited to real application circuits with tens of qubits. This limitation poses significant issues for Noisy Intermediate-Scale Quantum (NISQ) circuits, which often consist of hundreds of qubits and require efficient simulation and verification tools.

Our key contribution is introducing an approximation algorithm framework based on a tensor network diagram tailored to these tasks. In this framework, noisy quantum circuits are represented as double-size tensor networks, and noise operators are effectively approximated through Singular Value Decomposition (SVD) applied to their tensor representations. Leveraging this decomposition, we develop approximation algorithms for simulation and equivalence checking of noisy quantum circuits and implement it using Google's \texttt{TensorNetwork}~\cite{roberts2019tensornetwork} package.

The efficacy and practicality of our algorithm are demonstrated through extensive experiments on five distinct types of quantum circuits (algorithms). Our results indicate that the proposed method can approximately simulate and verify the equivalence of real application circuits with up to 200 qubits and 20 noise operations. Further evaluation shows that our method effectively handles circuits with realistic NISQ noises, including single-qubit noise and crosstalk. Notably, our approach outperforms state-of-the-art approximation techniques, including the quantum trajectories method~\cite{isakov2021simulations} and Matrix Product Operator (MPO)-based simulation methods, in terms of computational efficiency and scalability.

\subsection{Organization of the Paper:} In Section~\ref{sec:pre} we introduce the basics of quantum computation. In Section~\ref{sec:noisy_simulation}, we give the formal definition of the noisy simulation task and show a tensor network diagram as the basis of our approximation algorithm. The description of our approximate noisy simulation method and the analysis of its precision are presented in Section~\ref{sec:approximation_algorithm}. In Section~\ref{sec:eqc}, we develop an approximate equivalence checking algorithm based on the approximation techniques introduced in Section~\ref{sec:approximation_algorithm}. Finally, we evaluate the runtime and precision of our algorithms in Section~\ref{sec:experiments}.

\subsection{Related Works}

\subsubsection{Simulation of Quantum Circuits}
Simulating quantum circuits is a core task in design automation for quantum computing, particularly for noiseless circuits. The standard approach for simulation involves state vector evolution through matrix multiplication, which is widely implemented in frameworks such as Qiskit and Cirq. However, the exponential growth of state vectors with increasing qubits severely limits the scalability of this approach. To overcome this challenge, tensor network methods leverage the locality (gates operate on a limited number of qubits) and regularity (structured gate patterns) inherent in quantum circuits. These methods enable efficient simulation of large noiseless circuits~\cite{haner20175, huang2020classical, li2019quantum, pednault2017breaking, villalonga2019flexible, PhysRevLett.128.030501}. Decision Diagram (DD)-based methods, inspired by classical Binary Decision Diagrams (BDDs), further optimize memory and performance by compactly representing state amplitudes~\cite{10.1109/DAC18074.2021.9586191, zulehner2018advanced}. However, DD-based methods struggle to handle circuits with arbitrary parameterized gates, such as those found in quantum supremacy experiments~\cite{10.1109/DAC18074.2021.9586191, hillmich2021accurate}.

For noisy quantum circuits, density matrices are used, but scalability is limited by the exponential growth of the Hilbert space with the number of qubits. To mitigate this, approximation techniques such as the quantum trajectories method~\cite{isakov2021simulations, ddsim_noise} have been developed. The quantum trajectories method is a Monte Carlo-based approach for simulating noisy quantum circuits. It works by probabilistically sampling the Kraus operators associated with noise channels and averaging the results obtained from multiple samples. By decomposing noise operators into Kraus matrices, the simulation of each sample reduces to a noiseless quantum circuit simulation task. This allows the method to leverage various noiseless simulation backends. For instance, Google's quantum trajectory-based noisy circuit simulator, qsim~\cite{isakov2021simulations}, employs matrix-vector multiplication as its computational backend. Similarly, a decision diagram (DD)-based backend for quantum trajectories has been implemented in DDSIM~\cite{ddsim_noise}. Tensor network-based methods, such as Matrix Product States (MPS), Matrix Product Operators (MPO), and Matrix Product Density Operators (MPDO), offer another avenue for efficient simulation. These techniques rely on Singular Value Decomposition (SVD) to compress and manipulate quantum states and operators, enabling scalable simulations for certain classes of circuits~\cite{zhou2020limits, woolfe2015matrix, noh2020efficient, cheng2021simulating}.

Despite these advancements, simulating large, complex circuits with significant noise remains challenging. To address these issues, we propose an approximation algorithm based on tensor network diagrams. Our approach achieves greater computational efficiency, especially at low noise levels, compared to the quantum trajectories method and MPO-based techniques.

\subsubsection{Equivalence Checking of Quantum Circuits}
Equivalence checking of combinational quantum circuits has been extensively studied, with methods based on Quantum Multiple-valued Decision Diagrams (QMDDs)~\cite{niemann2014equivalence, burgholzer2020advanced, burgholzer2020improved}. Additionally, Tensor Decision Diagrams (TDD) have been developed to check the equivalence of both combinational and dynamic quantum circuits~\cite{hong2021approximate, hong2022equivalence}, where the latter use measurement outputs to control other parts of the circuit. Most equivalence checking techniques focus on determining the exact equality between two circuits (up to a global phase). In contrast, we focus on approximate equivalence, as proposed in~\cite{hong2021approximate}. Our approach provides a more efficient algorithm for checking the equivalence of noisy quantum circuits, improving performance in practical applications where exact equivalence is not feasible.

\subsubsection{Relation to Our Previous Work}
A preliminary version of our approximation simulation method was presented as a conference paper~\cite{huang2024approximation}. This paper significantly extends that research by applying the method to the equivalence checking task and enhancing existing approximation and equivalence checking techniques. Through additional experiments and analysis, we demonstrate the effectiveness of our approach, providing further evidence of its applicability.

\section{Preliminary}\label{sec:pre}

{\bf Basics of Quantum Computation:}
Quantum data are mathematically modeled as complex unit vectors in a $2^n$-dimensional Hilbert (linear) space $\h$, where $n$ represents the number of involved qubits. A quantum computing task is implemented by a \emph{quantum circuit}, which is mathematically represented by a $2^n\times 2^n$ unitary matrix $U$, i.e., $U^\dagger U=UU^\dagger=I_n$, where $I_n$ is the identity matrix on $\h$. For an input $n$-qubit state $\ket{\psi}$, the output of the circuit is $\ket{\psi'} = U\ket{\psi}.$

Like its classical counterpart, a quantum circuit $U$ consists of a sequence (product) of \emph{quantum logic gates} $U_i$, i.e., $U= U_d\cdots U_1$. Here $d$ is the gate number of the circuit $U$. Each gate $U_i$ only non-trivially operates on one or two qubits.
Fig.~\ref{fig:circuit_example} shows an example of a 2-qubit quantum circuit for the QAOA algorithm.

\begin{figure}
	\begin{align*}
		\begin{array}{c}
			\Qcircuit @C=0.5em @R=0.5em @!R {
			 & \gate{\mathrm{R_Y}\,(\mathrm{\frac{-\pi}{2}})} & \gate{\mathrm{R_Z}\,(\mathrm{\frac{\pi}{2}})} & \ctrl{1}  & \ctrl{1}                      & \gate{\mathrm{R_X}\,(\mathrm{\pi})} & \qw \\
			 & \gate{\mathrm{R_Y}\,(\mathrm{\frac{-\pi}{2}})} & \gate{\mathrm{R_Z}\,(\mathrm{\frac{\pi}{2}})} & \ctrl{-1} & \gate{\mathrm{R_Z}\,(\theta)} & \gate{\mathrm{R_X}\,(\mathrm{\pi})} & \qw
			}
		\end{array}
	\end{align*}
	\caption{A 2-qubit QAOA circuit}
	\label{fig:circuit_example}
\end{figure}
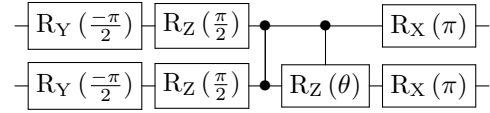

{\bf Quantum Noises:}  In the current NISQ era, quantum noise is unavoidable, so we have to consider the noisy effect in quantum computing.
A \emph{mixed state} is described by a $2^n\times 2^n$ \emph{density matrix} $\rho$ on $\h$:
$\rho = \sum_{k} p_{k}\ket{\psi_{k}}\bra{\psi_{k}}$. In this situation, a computational task starting at a mixed state $\rho$ is finished by a mapping $\e$:
$\rho'=\e(\rho),$ where $\e$ is a quantum channel. A channel $\e$ (also called a \emph{super-operator}) admits a \emph{Kraus matrix form}~\cite{Nielsen2010}: there exists a finite set $\{E_k\}_{k\in\mathcal{K}}$ of matrices on $\h$ such that
\[\setlength\abovedisplayskip{1.5pt}
	\setlength\belowdisplayskip{1.5pt}
	\e(\rho)=\sum_{k\in\k}E_k\rho E_k^\dagger \quad \textrm{ with  } \sum_{k\in\mathcal{K}}E_k^\dagger E_k=I_n,\]
where $\{E_k\}_{k\in\k}$ is called \emph{Kraus matrices} of $\e$. Specifically, quantum gate $U$ can be viewed as a unitary channel where $\e(\rho) = U \rho U^\dagger$. Briefly, $\e$ is represented as $\e=\{E_k\}_{k\in\k}$. An $n$-qubit noise can be described by an $n$-qubit quantum channel with $4^n$ Kraus matrices. A typical single-qubit quantum noise is depolarizing noise, which is represented by the set of operators \( \{\sqrt{1-p}I, \sqrt{\frac{p}{3}}X, \sqrt{\frac{p}{3}}Y, \sqrt{\frac{p}{3}}Z\} \). In this paper, we also adopt a decoherence noise model derived from superconducting circuits~\cite{huang2022fault}.  Mathematically, decoherence manifests as the decay of off-diagonal elements in the density matrix of mixed states, resulting from amplitude damping and phase damping.

\begin{enumerate}
	\item \textbf{Amplitude Damping ($\mathcal{E}_a$):}
	      Amplitude damping describes the gradual loss of energy from excited states. It is characterized by the following Kraus operators:
	      \[
		      E_{a, 1} = \begin{pmatrix} 1 & 0 \\ 0 & e^{-\Delta t/2T_1} \end{pmatrix},
		      E_{a, 2} = \begin{pmatrix} 0 & \sqrt{1 - e^{-\Delta t/T_1}} \\ 0 & 0 \end{pmatrix}
	      \]
	      where $T_1$ is the energy relaxation time and $\Delta t$ is the gate time~\cite{huang2022fault}.

	\item \textbf{Phase Damping ($\mathcal{E}_p$):}
	      Phase damping refers to pure dephasing, where the off-diagonal elements of the density matrix decay without affecting the diagonal elements. It is represented by the following Kraus operators:
	      \[
		      \begin{aligned}
			      E_{p, 0} & = e^{-\Delta t/2T_{\varphi}}I,                             \\
			      E_{p, 1} & = \sqrt{1 - e^{-\Delta t/T_{\varphi}}}|0\rangle\langle 0|, \\
			      E_{p, 2} & = \sqrt{1 - e^{-\Delta t/T_{\varphi}}}|1\rangle\langle 1|,
		      \end{aligned}
	      \]
	      where $T_\varphi$ is defined as $\frac{1}{T_\varphi} = \frac{1}{T_2} - \frac{1}{2T_1}$, and $T_2$ is the total dephasing time~\cite{huang2022fault}.
\end{enumerate}

The overall decoherence noise, denoted by $\mathcal{E}_D$, is the composition of amplitude damping and phase damping:
\[
	\mathcal{E}_D = \mathcal{E}_p \circ \mathcal{E}_a.
\]

\revision{Besides these local noises affecting individual qubits, real quantum devices often exhibit spatially or temporally correlated noises involving multiple qubits. A prevalent example is \emph{crosstalk noise}, which arises from unwanted interactions between neighboring qubits due to imperfect isolation in hardware. Crosstalk can manifest as additional couplings, such as extra ZZ gates, leading to correlated phase errors. For incoherent crosstalk (e.g., when dephased by environmental interactions), it can be represented as a multiple-qubit quantum channel. Although research has modeled multiple-qubit noise models~\cite{breuer2016colloquium} (e.g., crosstalk~\cite{sarovar2020detecting}), detailed crosstalk noise models tailored to realistic devices remain unexplored. Thus, we use two-qubit depolarizing noise as a representative multi-qubit channel for our evaluation.}


Similar to a noiseless quantum circuit \( U \), a noisy quantum circuit \( \mathcal{E} \) also consists of a sequence of noisy gates \( \{\mathcal{E}_i\} \), i.e., \( \mathcal{E} = \mathcal{E}_d \circ \cdots \circ \mathcal{E}_1 \), where each \( \mathcal{E}_i \) is either a noiseless gate or a noise channel.

\section{Noisy Quantum Circuit Simulation}\label{sec:noisy_simulation}

The simulation task for a noiseless quantum circuit involves computing the state vector representation of the circuit's output state given an initial input state. For noisy quantum circuits, the simulation task instead aims to estimate the density matrix of the output state. Specifically, given a noisy quantum circuit \( \mathcal{E}_{\mathcal{N}} \) and an input state \( \rho_0 \), the goal is to estimate \( \mathcal{E}_{\mathcal{N}}(\rho_0) \). The density matrix \( \mathcal{E}_{\mathcal{N}}(\rho_0) \) is a \( 2^n \times 2^n \) complex matrix. However, in practice, it is often unnecessary to store all of its elements explicitly. Instead, many applications require only statistical values obtained from measurements of the state.

In this paper, we focus on simulating the measurement probability of a quantum state, where the measurement is defined by a state \( \ket{v} \), as formalized in Problem~\ref{prob:main}. This setting aligns with current test and verification algorithm frameworks and is widely adopted in hardware experiments. For example, in ATPG of quantum circuits~\cite{chen2024automatic}, quantum gates are discriminated by choosing two states \( \ket{w} \) and \( \ket{w'} \), followed by measurements defined by \( \ketbra{w}{w} \) and \( \ketbra{w'}{w'} \) on the output state separately.

\begin{prob}[Noisy Simulation Task of Quantum Circuits]\label{prob:main}
	Given a noisy quantum circuit with depth \( d \) represented as \( \mathcal{E}_{\mathcal{N}} = \mathcal{E}_d \circ \cdots \circ \mathcal{E}_1 \)~\cite{Nielsen2010}, an input state \( \ket{\psi} \), and a measurement state \( \ket{v} \), the \emph{noisy simulation task} of \( \mathcal{E}_{\mathcal{N}} \) on \( \ket{\psi} \) is to estimate the probability
	\(\bra{v} \mathcal{E}_{\mathcal{N}} (\ketbra{\psi}{\psi}) \ket{v}\) with high accuracy.
\end{prob}

Every entry of \( \mathcal{E}_{\mathcal{N}}(\rho_0) \) can be independently estimated by computing \( \bra{x} \mathcal{E}_{\mathcal{N}}(\rho_0) \ket{y} \), where \( \ket{x}, \ket{y} \) are pure states chosen from the computational basis of \( \mathcal{H} \). Furthermore, solving Problem~\ref{prob:main} allows us to compute these entries efficiently, as they can be expressed as:

\begin{equation}
	\begin{aligned}
		\bra{x} \mathcal{E}_{\mathcal{N}}(\rho_0) \ket{y} = & \frac{1}{4} \Big( (\bra{x} + \bra{y}) \mathcal{E}_{\mathcal{N}}(\rho_0) (\ket{x} + \ket{y}) \\
        & - (\bra{x} - \bra{y}) \mathcal{E}_{\mathcal{N}}(\rho_0) (\ket{x} - \ket{y})                 \\
        & - i (\bra{x} - i\bra{y}) \mathcal{E}_{\mathcal{N}}(\rho_0) (\ket{x} + i\ket{y})             \\
        & + i (\bra{x} + i\bra{y}) \mathcal{E}_{\mathcal{N}}(\rho_0) (\ket{x} - i\ket{y}) \Big).
	\end{aligned}
\end{equation}

We aim to solve Problem~\ref{prob:main} using tensor networks as the data structure for modeling quantum circuits and SVD for approximating the tensor representation of quantum noise. In the following, we introduce a tensor network diagram for the noisy simulation task in Problem~\ref{prob:main}, which is the foundation for our approximation simulation algorithm. Furthermore, this diagram directly leads to a tensor network-based algorithm for exactly computing $\bra{ v }\e_{\N}(\ketbra{\psi}{\psi})\ket{ v }$.

Before introducing our noisy simulation method, we first present a direct computational procedure that demonstrates how tensor networks can be applied to simulate noisy quantum circuits. By leveraging the Kraus representations of quantum noise, the expression $\bra{v}\e_{\N}(\ketbra{\psi}{\psi})\ket{v}$ can be expanded as follows:

\begin{equation}\label{eq:trivial}
\begin{aligned}
  & \left\langle v \left|\e_{\N}\left(\ketbra{\psi}{\psi}\right)\right| v \right\rangle\\
  =\; & \left\langle v \left|\e_d \circ \cdots \circ \mathcal{E}_1\left(\ketbra{\psi}{\psi}\right)\right| v \right\rangle\\
  =\; & \sum_{k_1 \in \mathcal{K}_1} \cdots \sum_{k_d \in \mathcal{K}_d}
        \left\langle v \right|E_{d k_d} \cdots E_{1 k_1}\left| \psi\right\rangle
        \left\langle\psi\right|E_{1 k_1}^{\dagger} \cdots E_{d k_d}^{\dagger}\left| v \right\rangle\\
  =\; & \sum_{k_1, \cdots, k_d }
        \left\langle v \left|E_{d k_d} \cdots E_{1 k_1}\right| \psi\right\rangle
        \left\langle v \left|E_{d k_d} \cdots E_{1 k_1}\right| \psi\right\rangle^*.
\end{aligned}
\end{equation}

This expansion enables a tensor network approach in which one substitutes the Kraus matrices and sums the resulting contractions. However, this method requires $4^{N_1 + 2N_2}$ contractions, where $N_1$ and $N_2$ are the number of single-qubit and two-qubit noises in the simulated noisy circuit, limiting scalability. The exponential cost on $N_1$ and $N_2$ leads to the fact that only noisy quantum circuits with very limited numbers of noise can be handled when simulating NISQ quantum circuits.

	{\bf Tensor Network Diagram for Noisy Simulation:}
Instead of calculating each term separately in Eq.~(\ref{eq:trivial}),
an alternative way to compute $\bra{ v }\e_{\N}(\ketbra{\psi}{\psi})\ket{ v }$ is as follows by using the matrix representation of quantum super-operators~\cite[Chapter 2.2.2]{watrous2018theory}.
\begin{equation*}
	\begin{aligned}
		\bra{ v }\e_{\N}(\ketbra{\psi}{\psi})\ket{ v } & = \tr( \ketbra{v}{v} \e_{\N}(\ketbra{\psi}{\psi}))\\
        & = \bra{\Omega}[\ketbra{ v ^*}{ v ^*}\otimes \e_{\N}(\ketbra{\psi}{\psi})]\ket{\Omega}\\
        & =\bra{\Omega}[\ketbra{ v ^*}{ v ^*}\otimes \e_d\circ\cdots\circ\e_1(\ketbra{\psi}{\psi})]\ket{\Omega}\\
        & = \bra{ v }\otimes\bra{ v ^*}(M_{\e_d}\cdots M_{\e_1}) \ket{\psi}\otimes\ket{\psi^*},
	\end{aligned}
\end{equation*}
where $\ket{\psi^*}$ is the entry-wise conjugate of $\ket{\psi}$, $\ket{\Omega}$ is the (unnormalized) maximal entangled state, i.e., $\ket{\Omega}=\sum_{j}\ket{j}\otimes\ket{j}$ with $\{\ket{j}\}$ being an orthonormal  basis of Hilbert space $\h$, and $M_\e=\sum_{k}E_k\otimes E_k^*$ for $
	\e=\{E_k\}$ is called the \emph{matrix representation} of $\e$. The tensor network notation can visualize this representation. For simplicity, we illustrate this using a one-qubit circuit that consists of only one noise. The trivial calculation method in Eq~(\ref{eq:trivial}) corresponds to the tensor network on the left of Fig.~\ref{fig:alg1}. We can rotate the dashed part of the tensor network and get a tensor network as illustrated on the right side of Fig.~\ref{fig:alg1}.

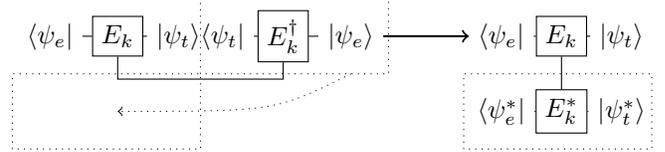
\begin{figure}[!ht]
  \centering
  \begin{tikzpicture}
	\node[draw,rectangle,minimum width=0.5cm, minimum height = 0.5cm] (E) at (0.9,1) {$E_k$};
	\node[draw,rectangle,minimum width=0.5cm, minimum height = 0.5cm] (ED) at (3.1,1) {$E_k^\dagger$};
	\node (D) at (2,1) {$\ketbra{\psi}{\psi}$};
	\node (IL) at (0,1) {$\bra{v}$};
	\node (IR) at (4,1) {$\ket{v}$};
	\draw (IL) -- (E);
	\draw (E) -- (D);
	\draw (D) -- (ED);
	\draw (ED) --(IR);
	\draw[dotted] (2, 0.5) rectangle (4.5, 1.5);
	\draw[dotted] (-0.5, -0.5) rectangle (2, 0.5);
	\draw[dotted, ->] (4, 0.5) .. controls (3, 0) .. (0.9, 0);
	\draw (E.south) -- ++(0, -0.3) -- ++(2.2,0) -- (ED.south);
	
	\node[draw,rectangle,minimum width=0.5cm, minimum height = 0.5cm] (E2) at (6.8,1) {$E_k$};
	\node[draw,rectangle,minimum width=0.5cm, minimum height = 0.5cm] (ED2) at (6.8,0) {$E_k^*$};
	\node (OU) at (7.6,1) {$\ket{\psi}$};
	\node (OD) at (7.6,0) {$\ket{\psi^*}$};
	\node (IU) at (6,1) {$\bra{v}$};
	\node (ID) at (6,0) {$\bra{v^*}$};
	\draw (IU) -- (E2);
	\draw (E2) -- (OU);
	\draw (ID) -- (ED2);
	\draw (ED2) -- (OD);
	\draw (E2.south)  -- (ED2.north);
	\draw[dotted] (5.5, -0.5) rectangle (8, 0.5);
	
	\draw[->,thick] (IR) -- (IU);
	
\end{tikzpicture}
  \caption{Tensor Network Diagram for Noisy Simulation}
  \label{fig:alg1}
\end{figure}

We note that \( M_{\e} = \sum_{k \in \k} E_k \otimes E_k^* \) encodes the effect of quantum noise within the tensor-network representation.
In the special case of a unitary channel \( \u = \{U\} \), this reduces to \( M_{\u} = U \otimes U^* \).
The tensor-network structures corresponding to \( M_{\e} \) and \( M_{\u} \) are illustrated below.
\begin{align*}
	M_{\e} & =\sum_{k\in\k}E_{k}\otimes E_{k}^*=\begin{aligned}
    \Qcircuit @C=1em @R=1em{
     & \gate{E_k}       & \qw  \\
     & \gate{E_k*} \qwx & \qw}\end{aligned} \\
	M_{\u} & =U\otimes U^*=\begin{aligned}
   \Qcircuit @C=1em @R=1em{
    & \gate{U}   & \qw  \\
    & \gate{U^*} & \qw}\end{aligned}
\end{align*}
With this observation, we get a serial connection of the two tensor networks representing $n$-qubit circuits $\e_{\N}$. Subsequently, we can compute $\bra{ v }\otimes\bra{ v ^*}(M_{\e_d}\cdots M_{\e_1}) \ket{\psi}\otimes\ket{\psi^*}$ by contracting a tensor network with double size ($2n$ qubits). An illustration of the tensor network diagram of the two-qubit QAOA circuit mentioned in Sec.~\ref{sec:pre} is shown in Fig.~\ref{fig:double_example}. Based on this idea, an algorithm (Algorithm~\ref{Algorithm:Series}) can be used to compute $\bra{ v }\e_{\N}(\ketbra{\psi}{\psi})\ket{ v }$. When we contract tensor networks, splitting is a standard technique that cuts some edges and converts the task of contracting a large tensor network into contracting a set of smaller tensor networks. Algorithm~\ref{Algorithm:Series} is the same as the direct computational method if we cut all vertical edges of the tensor representation of $M_\e$. However, the double-size tensor network enables tensor network contraction tools to find other more efficient contraction strategies that may be better than  the trivial method in~Eq.(\ref{eq:trivial}).

\begin{figure}
	\centering
	\includegraphics[]{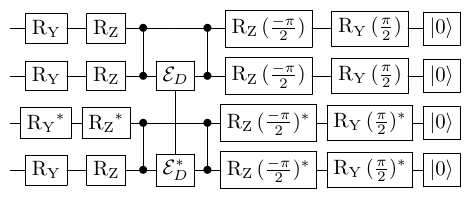}
	\caption{Circuit for noisy simulation of the 2-qubit QAOA circuit in Fig.~\ref{fig:circuit_example} with decoherence noise $\e_{D}$ by appending $\ket{ v } = U\ket{\psi}$ to the end of the circuit, and redundant gates are canceled out.}
	\label{fig:double_example}
\end{figure}

\begin{algorithm}[htbp]
	\caption{NoiseSimulation($\e_{\N},\ket{\psi},\ket{ v }$)}
	\label{Algorithm:Series}
	{\small{
			\begin{algorithmic}[1]
				\Require A noisy quantum circuit $\e_{\N}=\e_d\circ\cdots\circ\e_1$  with $\e_{i}=\{E_{ik}\}_{k\in\k_i}$ for $1\leq i\leq d$, a test state $\ket{\psi}$  and an expected output $\ket{ v }$.
				\Ensure Noisy simulation result $\bra{ v }\e_{\N}(\ketbra{\psi}{\psi})\ket{ v }$
				\State Computing $X=\bra{ v }\otimes\bra{ v ^*}(M_{\e_d}\cdots M_{\e_1}) \ket{\psi}\otimes\ket{\psi^*}$ by calling a tool for tensor network contraction.
				\State\Return X
			\end{algorithmic}
		}}
\end{algorithm}
\section{Approximation Algorithm For Noisy Quantum Circuit Simulation}\label{sec:approximation_algorithm}
We are ready to present our approximation noisy simulation algorithm based on the tensor network diagram with the matrix representation of noises.

	{\bf Approximation Noisy Simulation Algorithm:}
As we can see, the noises make simulating a quantum circuit much harder. To handle the difficulty of simulating a circuit with a large number of noises, we introduce an approximation noisy circuit simulation method based on the matrix representation and SVD, which can balance the accuracy and efficiency of the simulation. The insight of our method starts with an observation that most noises occurring in physical quantum circuits are close to the identity operators (matrices). Take the single-qubit depolarizing noise as an example, which is defined by
\[
	\e(\rho)=(1-p) \rho+\frac{p}{3}(X \rho X+Y \rho Y+Z \rho Z).
\]
Probability $p$ can be regarded as a metric for the effectiveness of the noise channel. When $p$ is small, the noise is almost identical to the identity channel. Therefore, a straightforward method is to approximately represent the depolarizing channel by $(1-p)I$. The current physical implementation ensures that the effectiveness of noise is insignificant.
For general noise $\e$, we define $\|M_\e - I\|$  as the \emph{noise rate} of $\e$, where $\|\cdot\|$ is the 2-norm of matrix. For example, the single-qubit depolarizing noise with parameter $p$ has a noise rate $2p$.

We apply SVD to $M_\e$ for low-rank approximation, exploiting that realistic noises are near-identity. A graphical illustration of this decomposition for single-qubit noise is presented in Fig.~\ref{fig:svd}, where the matrix is reshaped via tensor permutation to facilitate SVD and subsequent truncation. Multi-qubit noises, such as the two-qubit depolarizing channel or crosstalk models introduced in Section~\ref{sec:pre}, can be handled analogously by treating their higher-dimensional matrix representations (e.g., 16×16 for two qubits) as tensors of appropriate rank, applying SVD for decomposition and approximation.

\begin{figure}[!ht]
	\centering
	\subfloat[Tensor Permutation]{
		\includegraphics[width=0.45\textwidth]{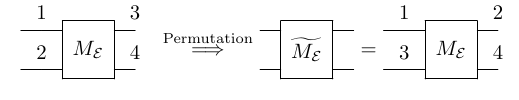}
		\label{subfig:svd1}
	}
	\hfill
	\vspace{-0.5cm} 
	\subfloat[SVD on $\widetilde{M_\epsilon}$]{
		\includegraphics[width=0.45\textwidth]{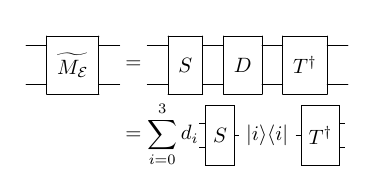}
		\label{subfig:svd2}
	}

	\vspace{-0.5cm} 
	\centering
	\subfloat[Tensor Permutation]{
		\includegraphics[width=0.45\textwidth]{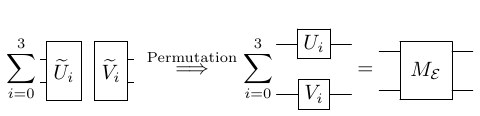}
		\label{subfig:svd3}
	}

	\caption{Decomposition of $M_\epsilon$}
	\label{fig:svd}
\end{figure}

Suppose we treat $M_{\e}$ as a tensor with rank 4 for single-qubit noise, and the edges are indexed as in Fig.~\ref{fig:svd} (a). $M_{\e}$ is the matrix with edges $1, 2$ being the row and $3, 4$ being the column. We introduce the tensor permutation operator where $\tilde{M_{\e}}$ is the matrix with $1,3$ being the row and $2, 4$ being the column. For example, the identity operator $I$ on two qubits  and its tensor permutation $\tilde{I}$ are:
\[
	\begin{array}{c}
		I = \left(\begin{matrix}
				          1 & 0 & 0 & 0 \\
				          0 & 1 & 0 & 0 \\
				          0 & 0 & 1 & 0 \\
				          0 & 0 & 0 & 1
			          \end{matrix}\right), 
	\end{array}
	\begin{array}{c}
		\tilde{I} = \left(\begin{matrix}
				                  1 & 0 & 0 & 1 \\
				                  0 & 0 & 0 & 0 \\
				                  0 & 0 & 0 & 0 \\
				                  1 & 0 & 0 & 1
			                  \end{matrix}\right).
	\end{array}
\]
Performing SVD on $\tilde{M_{\e}}$ we have $\tilde{M_{\e}} = S D T^\dagger$, where $D = \sum\limits_{i=0}^3 d_i \ketbra{i}{i}$ and $d_0$ is the most significant singular value (see Fig.~\ref{fig:svd}(b)). Let $\tilde{U_i} = d_i S\ket{i}$ and $\tilde{V_i} = T\ket{i}$ as in the left of Fig.~\ref{fig:svd} (c), we get $\tilde{M_{\e}} = \sum\limits_{i=0}^3 \tilde{U_i} \tilde{V_i}^\dagger$. Then use tensor permutation again on $\tilde{U_i}$ and $\tilde{V_i}$ (here $\tilde{U_i}$ is treated as a rank 4 tensor with index 3,4 being 0-dimension, and $\tilde{V_i}$ is treated as a rank 4 tensor with index 1,2 being 0-dimension), we get $M_{\e} = \sum\limits_{i = 0, 1, 2, 3} U_i \otimes V_i$, as illustrated in Fig.~\ref{fig:svd} (c).

\begin{example}[Decomposition of Depolarizing Noise]
	Here we explicitly illustrate the calculation steps of our decomposition method on single-qubit depolarizing noise. Specifically, consider depolarizing noise with $p=0.01$, i.e. $\mathcal{E}(\rho)=0.99 \rho+\frac{1}{300}(X \rho X+Y \rho Y+Z \rho Z)$. Our method is as follows:
	\begin{enumerate}
		\item First, from the Kraus matrices of depolarizing noise, we can calculate the matrix representation of the depolarizing noise, which is
		      \begin{align*}
			      M_{\e} & = 0.99I\otimes I + \frac{1}{300}(X\otimes X + Y\otimes Y + Z\otimes Z) \\
			             & = \left(\begin{matrix}
					                       0.9933 & 0      & 0      & 0      \\
					                       0      & 0.9867 & 0.0067 & 0      \\
					                       0      & 0.0067 & 0.9867 & 0      \\
					                       0      & 0      & 0      & 0.9933
				                       \end{matrix}\right).
		      \end{align*}
		\item We start our decomposition by performing tensor permutation first, which derives the matrix
		      \[
			      \widetilde{M_\e} = \left(\begin{matrix}
					      0.9933 & 0      & 0      & 0.9867 \\
					      0      & 0      & 0.0067 & 0      \\
					      0      & 0.0067 & 0      & 0      \\
					      0.9867 & 0      & 0      & 0.9933
				      \end{matrix}\right).
		      \]
            \item By performing SVD on the derived matrix $\widetilde{M_\e}$ as illustrated in Fig.~\ref{fig:svd}(b), we have $\widetilde{M_\e} = S D T^\dagger$, where
            \begin{equation*}
            S = \begin{pmatrix}
            -0.7071 & 0.7071 & 0 & 0 \\
            0 & 0 & -1 & 0 \\
            0 & 0 & 0 & 1 \\
            -0.7071 & -0.7071 & 0 & 0
            \end{pmatrix},
            \end{equation*}
            \begin{equation*}
            D = \begin{pmatrix}
            1.98 & 0 & 0 & 0 \\
            0 & 0.0067 & 0 & 0 \\
            0 & 0 & 0.0067 & 0 \\
            0 & 0 & 0 & 0.0067
            \end{pmatrix},
            \end{equation*}
            \begin{equation*}
            T = \begin{pmatrix}
            -0.7071 & 0 & 0 & -0.7071 \\
            0.7071 & 0 & 0 & -0.7071 \\
            0 & 0 & -1 & 0 \\
            0 & 1 & 0 & 0
            \end{pmatrix}.
            \end{equation*}
		\item Finally, by calculating \( \widetilde{U}_i \) and \( \widetilde{V}_i \), and performing the tensor permutation once more, we complete the decomposition of \( M_{\e} \).
		      \begin{align*}
			      M_{\e} & = 1.98 \left(\begin{matrix} -0.707 & 0 \\ 0 & -0.707 \end{matrix}\right) \otimes \left(\begin{matrix} -0.707 & 0 \\ 0 & -0.707 \end{matrix}\right) \\
			             & + 0.0067 \left(\begin{matrix} 0.707 & 0 \\ 0 & -0.707 \end{matrix}\right) \otimes \left(\begin{matrix} 0.707 & 0 \\ 0 & -0.707 \end{matrix}\right) \\
			             & + 0.0067 \left(\begin{matrix} 0 & -1 \\ 0 & 0 \end{matrix}\right) \otimes \left(\begin{matrix} 0 & 0 \\ -1 & 0 \end{matrix}\right)                 \\
			             & + 0.0067 \left(\begin{matrix} 0 & 0 \\ 1 & 0 \end{matrix}\right) \otimes \left(\begin{matrix} 0 & 1 \\ 0 & 0 \end{matrix}\right).
		      \end{align*}
	\end{enumerate}
\end{example}

Next, we will show that when the noise rate of $\e$ is small, $U_0 \otimes V_0$ is a good approximation of $M_\e$.

\revision{
\begin{lem}
	Suppose $A$ and $B$ are $n \times n$ matrices, and $\tilde{A}$ and $\tilde{B}$ are the tensor permutation of $A$ and $B$ as defined. If $\|A - B\| < \delta$, then we have $\|\tilde{A} - \tilde{B}\| < \sqrt{n} \delta$.
\end{lem}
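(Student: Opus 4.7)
The plan is to reduce the statement to a comparison between the Frobenius norm and the spectral (2-) norm. First I would observe that tensor permutation is a linear operation on the space of matrices: if $C = A - B$, then $\tilde{C} = \tilde{A} - \tilde{B}$. So it suffices to prove that for every $n \times n$ matrix $C$ with $\|C\| < \delta$ one has $\|\tilde{C}\| < \sqrt{n}\,\delta$, where $\|\cdot\|$ is the spectral norm as fixed earlier in the paper.

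The key structural fact I would use is that the tensor permutation is just a re-indexing of the entries of the matrix (it reshuffles the entries of the underlying rank-$4$ tensor but does not change the multiset of entries). Consequently the Frobenius norm is invariant under tensor permutation:
\begin{equation*}
    \|\tilde{C}\|_F^2 \;=\; \sum_{i_1,i_2,j_1,j_2} |\tilde{C}_{i_1 j_1 i_2 j_2}|^2 \;=\; \sum_{i_1,i_2,j_1,j_2} |C_{i_1 i_2 j_1 j_2}|^2 \;=\; \|C\|_F^2 .
\end{equation*}

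With this in hand I would chain the two standard inequalities between the spectral and Frobenius norms for an $n\times n$ matrix: $\|M\| \le \|M\|_F$ (trivially, since $\|M\|$ is the largest singular value while $\|M\|_F^2$ sums all squared singular values), and $\|M\|_F \le \sqrt{n}\,\|M\|$ (since there are at most $n$ singular values, each bounded by the largest). Applying these to $\tilde{C}$ and $C$ respectively yields
\begin{equation*}
    \|\tilde{C}\| \;\le\; \|\tilde{C}\|_F \;=\; \|C\|_F \;\le\; \sqrt{n}\,\|C\| \;<\; \sqrt{n}\,\delta ,
\end{equation*}
which is exactly the stated bound.

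There is no real obstacle here beyond being precise about what counts as the ``tensor permutation''. The only point that needs care is writing out the index bookkeeping in the Frobenius-invariance step so that the reader can verify that the map $C \mapsto \tilde{C}$ really is an entry permutation (and in particular an isometry in $\|\cdot\|_F$); after that, the chain of spectral–Frobenius inequalities is standard and gives the factor $\sqrt{n}$ for free.
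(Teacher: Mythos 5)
Your proposal is correct and follows essentially the same route as the paper's proof: both use the linearity of the tensor permutation, the invariance of the Frobenius norm under entry re-indexing, and the standard chain $\|\tilde{C}\| \le \|\tilde{C}\|_F = \|C\|_F \le \sqrt{n}\,\|C\|$. Your write-up is just slightly more explicit about the index bookkeeping, which is fine.
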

\begin{proof}
	We use $\|\cdot\|_{\mathrm{F}}$ to denote the Frobenius norm of matrices. For $n \times n$ matrices, it holds that $\|C\| \leq \|C\|_F \leq \sqrt{n} \|C\|$.
	Note that $A$ and $\tilde{A}$ have the same Frobenius norm since the tensor permutation operator only rearranges the positions of elements.
	Therefore we have $\|\tilde{A} - \tilde{B}\| \leq \|\tilde{A} - \tilde{B}\|_F = \|A - B\|_F \leq \sqrt{n} \|A-B\| < \sqrt{n} \delta$.
\end{proof}
\begin{lem}
	Suppose $M_\e$ is $n \times n$ matrix (with $n=4$ for single-qubit noise, $n=16$ for two-qubit noise), and $\|M_\e - I\| < \delta$, then we have $\|M_\e - U_0 \otimes V_0 \| < n \delta$.
\end{lem}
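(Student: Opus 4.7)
The plan is to reduce the bound on $\|M_\e - U_0 \otimes V_0\|$ to a bound on the rank-1 approximation error of $\tilde{M_\e}$, exploiting the fact that $U_0 \otimes V_0$ and $\tilde{U_0}\tilde{V_0}^\dagger$ are related by the same tensor permutation used to define the SVD in the first place. First I would invoke the preceding lemma to translate the hypothesis $\|M_\e - I\| < \delta$ into a bound on the permuted matrices, namely $\|\tilde{M_\e} - \tilde{I}\| < \sqrt{n}\,\delta$. This is the natural setting in which to apply SVD truncation, because the decomposition $M_\e = \sum_i U_i \otimes V_i$ is obtained precisely by applying SVD to $\tilde{M_\e}$ and then undoing the permutation.

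The key structural observation is that $\tilde{I}$ is a rank-$1$ matrix: indexing the entries of the $k$-qubit identity $I_n$ as $(I)_{(i_1 i_2)(j_1 j_2)} = \delta_{i_1 j_1}\delta_{i_2 j_2}$ and re-reading indices so that $(i_1 j_1)$ is the row and $(i_2 j_2)$ is the column, one sees that $\tilde{I} = \ket{\omega}\bra{\omega}$ with $\ket{\omega} = \sum_i \ket{ii}$, exactly as illustrated in the two-dimensional case in the excerpt. Consequently $\tilde{I}$ is a rank-1 matrix lying within spectral-norm distance $\sqrt{n}\,\delta$ of $\tilde{M_\e}$. Now the best rank-1 approximation of $\tilde{M_\e}$ in spectral norm, as given by the top singular triple, is precisely $\tilde{U_0}\tilde{V_0}^\dagger$, so the Eckart--Young--Mirsky theorem gives
\[
\|\tilde{M_\e} - \tilde{U_0}\tilde{V_0}^\dagger\| \;\leq\; \|\tilde{M_\e} - \tilde{I}\| \;<\; \sqrt{n}\,\delta.
\]

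Finally I would translate this back to the original ordering. Since tensor permutation merely rearranges entries, it preserves the Frobenius norm, so applying the standard inequalities $\|C\| \leq \|C\|_F \leq \sqrt{n}\,\|C\|$ before and after the permutation yields
\[
\|M_\e - U_0 \otimes V_0\| \;\leq\; \|M_\e - U_0 \otimes V_0\|_F \;=\; \|\tilde{M_\e} - \tilde{U_0}\tilde{V_0}^\dagger\|_F \;\leq\; \sqrt{n}\,\|\tilde{M_\e} - \tilde{U_0}\tilde{V_0}^\dagger\| \;<\; n\delta,
\]
which is the claimed inequality.

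The routine parts are the norm conversions and the application of Eckart--Young. The main conceptual step, and the place I would be most careful, is establishing that the rank-1 truncation $\tilde{U_0}\tilde{V_0}^\dagger$ in the permuted picture maps back under the inverse tensor permutation to the tensor product $U_0 \otimes V_0$ (so that one is really comparing the same object in both pictures), and that $\tilde{I}$ is indeed rank one so that Eckart--Young applies with $\tilde{I}$ as a competing rank-1 matrix rather than requiring a separate perturbation argument on singular values.
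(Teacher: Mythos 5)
Your proof is correct and follows essentially the same route as the paper's: apply Lemma~1 to pass to the permuted picture, note that $\tilde{I}$ has rank one so Eckart--Young--Mirsky bounds the top-singular-value truncation error by $\|\tilde{M_\e}-\tilde{I}\| < \sqrt{n}\,\delta$, and return via the Frobenius-norm comparison (which is exactly the content of the paper's second invocation of Lemma~1) to pick up the second factor of $\sqrt{n}$. The points you flag as needing care --- that $\widetilde{S D_0 T^\dagger} = U_0 \otimes V_0$ and that $\tilde{I}$ is rank one --- are precisely the two observations the paper's proof rests on.
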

\begin{proof}
	By Lemma 1, we have $\|\tilde{M_\e} - \tilde{I}\| < \sqrt{n} \delta$. Suppose we perform SVD on $\tilde{M_{\e}}$ we have $\tilde{M_{\e}} = S D T^\dagger$. And let $D_0 = d_0 \ketbra{0}{0}$, we have
	\[
		\|\tilde{M_\e} - S D_0 T^\dagger\| = \min_{rank(A) = 1} \| \tilde{M_\e} - A \| \leq \|\tilde{M_\e} - \tilde{I}\| < \sqrt{n} \delta.
	\]
	The first equation comes from the Eckart-Young-Mirsky theorem~\cite{eckart1936approximation}. The last inequality holds because $\tilde{I}$ has rank 1.
	And note that $\tilde{\tilde{M}}_\e = M_\e$ and $\widetilde{S D_0 T^\dagger} = U_0 \otimes V_0$, by Lemma 1, we have
	\[
		\|M_\e -  U_0 \otimes V_0\| < \sqrt{n} \cdot \sqrt{n} \delta = n \delta. \qedhere
	\]
\end{proof}
}
\revision{
We are ready to introduce our approximation algorithm for the noisy quantum circuit simulation task.
Suppose $\e_{\N}=\e_d\circ\cdots\circ\e_1$ is a quantum circuit with $N_1$ single-qubit noises and $N_2$ two-qubit noises, in total $N=N_1+N_2$, where the single-qubit noise channels are indexed by $\{i_1^1,\ldots,i_{N_1}^1\}$ and the two-qubit noise channels by $\{i_1^2,\ldots,i_{N_2}^2\}$.
After applying SVD to the matrix representation of each noise, we obtain for every noise $s \in \{i_1^1,\ldots,i_{N_1}^1\}\cup\{i_1^2,\ldots,i_{N_2}^2\}$ a decomposition
\[
  M_{\e_{s}} \;=\; \sum_{j=0}^{m-1} U_{j}^{\,s} \otimes V_{j}^{\,s},
\]
where $m=4$ if $s$ is single-qubit and $m=16$ if $s$ is two-qubit; moreover $M_{\e_{s}}$ is close to $U_0^{\,s}\otimes V_0^{\,s}$.
The idea of our algorithm is to compute the simulation result by substituting each noise with its dominant term $U_0^{\,s}\otimes V_0^{\,s}$ by default, and then progressively correcting with higher-order terms.
Let $M'_{\e_{s}} := U_0^{\,s}\otimes V_0^{\,s}$ and $\overline{M_{\e_{s}}} := \sum_{j=1}^{m-1} U_{j}^{\,s}\otimes V_{j}^{\,s}$.
Then $M_{\e_{s}} = M'_{\e_{s}} + \overline{M_{\e_{s}}}$ and, by Lemma~2, $\|\overline{M_{\e_{s}}}\| \le c\,\delta$ with $c=4$ for a single-qubit noise and $c=16$ for a two-qubit noise.
Let $T_u$ be the sum of tensor-network terms obtained by substituting all but $u$ noises with $M'_{\e_{s}}$ and the remaining $u$ noises each with one of the residual factors $U_{j}^{\,s}\otimes V_{j}^{\,s}$ ($j=1,\ldots,m-1$), i.e.
\[
  T_u \;=\; \sum_{1 \le p_1 < \cdots < p_u \le N}
        \ \prod_{r=1}^{u} \overline{M_{\e_{i_{p_r}}}}\ \prod_{t \notin \{p_1,\ldots,p_u\}} M'_{\e_{i_t}}.
\]
We call $A(l) := \sum_{u=0}^{l} T_u$ the \emph{$l$-level approximation} of $M_{\e_{\N}}$.
As $l$ increases, the approximation improves, and when $l=N$ we have $A(l)=M_{\e_{\N}}$ exactly.
Our algorithm computes $A(l)$ for a given $l$.
}
\begin{algorithm}[htbp]
\caption{ApproximationNoisySimulation($\e_{\N},\ket{\psi},\ket{ v }, l$)}
\label{Algorithm:Approximate}
{\small{
\begin{algorithmic}[1]
  \Require A noisy quantum circuit $\e_{\N}=\e_d\circ\cdots\circ\e_1$ with $N_1$ single-qubit noises and $N_2$ two-qubit noises, where the single-qubit noise channels are indexed by $\{i_1^1, \ldots, i_{N_1}^1\}$ with Kraus sets $\e_{i_s^1}=\{E_{s\alpha}\}_{\alpha\in\k_s}$ for $1\le s\le N_1$, and the two-qubit noise channels are indexed by $\{i_1^2, \ldots, i_{N_2}^2\}$ with Kraus sets $\e_{i_t^2}=\{E_{t\beta}\}_{\beta\in\k_t}$ for $1\le t\le N_2$; an input state $\ket{\psi}$; a measurement state $\ket{ v }$; and an approximation level $l$.
  \Ensure $G'=\bra{ v }\otimes\bra{ v ^*}A(l) \ket{\psi}\otimes\ket{\psi^*}$, the $l$-level approximation of $\bra{ v }\e_{\N}(\ketbra{\psi}{\psi})\ket{ v }$.
  \State Result $\gets 0$
  \State Generate the double-size tensor network as in Algorithm~\ref{Algorithm:Series}.
  \State For each noise $M_{\e_{s}}$, perform reshuffling and SVD to obtain $U_j^{\,s},V_j^{\,s}$ so that
         $M_{\e_{s}} = \sum_{j=0}^{m_s-1} U_j^{\,s}\otimes V_j^{\,s}$ with $m_s=4$ (single-qubit) or $m_s=16$ (two-qubit).
  \ForAll {$0 \leq u \leq l$} \Comment{approximation level}
    \State $R_u \gets 0$
    \ForAll{$1 \leq p_1 < \cdots < p_u \leq N$}
      \State Substitute $M_{\e_{i_{p_j}}}$ by one residual factor $U_m^{\,i_{p_j}} \otimes V_m^{\,i_{p_j}}$ with $m\in\{1,\ldots,m_{i_{p_j}}-1\}$, and all remaining $M_{\e_{s}}$ by $U_0^{\,s}\otimes V_0^{\,s}$.
      \State The double-size network splits into two $n$-qubit networks; contract both and multiply the scalars to obtain $X$.
      \State $R_u \gets R_u + X$
    \EndFor
    \State Result $\gets$ Result $+ R_u$
  \EndFor
  \State \Return Result
\end{algorithmic}
}}
\end{algorithm}

\revision{
\begin{thm}[Precision and Complexity of Algorithm~\ref{Algorithm:Approximate}]\label{thm: precision}
	Given a noisy quantum circuits $\e_{\N}$ with $d$ gates, $N_1$ single-qubit noises and $N_2$ two-qubit noises with all noise rates being less than $p$, i.e., $\|M_{\e_{i_s}} - I\| < p$ for all noises. For any input and output state $\ket{\psi}$ and $\ket{v}$, let $G = \bra{ v }\e_{\N}(\ketbra{\psi}{\psi})\ket{ v }$, and the approximation result is $G'=\bra{ v }\otimes\bra{ v ^*}A(l) \ket{\psi}\otimes\ket{\psi^*}$, we have $|G - G'| <  \sum_{u = l+1}^{N} \binom{N}{u} (16p)^{u} (1+16p)^{N - u}$, and the number of tensor network contractions is $\bigO{15^lN^l}$.
\end{thm}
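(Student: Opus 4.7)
The plan is to (i) reshape $M_{\e_\N}-A(l)$ into a sum of higher-order residual terms, (ii) bound that sum in operator norm via submultiplicativity together with Lemma~2, and (iii) count the tensor-network contractions enumerated by Algorithm~\ref{Algorithm:Approximate}. First I would substitute $M_{\e_{i_s}} = M'_{\e_{i_s}} + \overline{M_{\e_{i_s}}}$ at every noise site of the ordered product that realises $M_{\e_\N}$ and expand multilinearly, obtaining $M_{\e_\N} = \sum_{u=0}^{N} T_u$, where $T_u$ is precisely the sum of all terms in which $u$ of the $N$ noise sites contribute the residual factor. Since $A(l) = \sum_{u=0}^{l} T_u$ by definition, the error as an operator is $M_{\e_\N} - A(l) = \sum_{u=l+1}^{N} T_u$. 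Because $\ket{\psi}\otimes\ket{\psi^*}$ and $\ket{v}\otimes\ket{v^*}$ are unit vectors, this gives $|G-G'| \le \|M_{\e_\N} - A(l)\| \le \sum_{u=l+1}^{N} \|T_u\|$.

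Next I would bound each $\|T_u\|$ by submultiplicativity of the operator norm. Unitary gates contribute a factor of $1$, and at each noise site either $M'_{\e_{i_s}}$ or $\overline{M_{\e_{i_s}}}$ appears. Lemma~2 applied with the worst-case dimension (at most $n=16$, arising from a two-qubit noise) gives $\|\overline{M_{\e_{i_s}}}\| \le 16 p$; combining $\|M_{\e_{i_s}} - I\| \le p$ with the triangle inequality then yields $\|M'_{\e_{i_s}}\| \le \|M_{\e_{i_s}}\| + \|\overline{M_{\e_{i_s}}}\| \le 1 + 16 p$ (absorbing the residual $p$ into the $16p$ term). Multiplying these per-site bounds across the ordered product and summing over the $\binom{N}{u}$ choices of which $u$ noise sites carry the residual gives $\|T_u\| \le \binom{N}{u}(16p)^u (1+16p)^{N-u}$; summing from $u=l+1$ to $N$ produces the precision bound.

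For the complexity, Algorithm~\ref{Algorithm:Approximate} iterates over $u \le l$, then over the $\binom{N}{u}$ subsets of $u$ noise sites, and then, at each chosen site, over the $m_s-1$ residual factors ($3$ for a single-qubit noise, $15$ for a two-qubit noise). The worst case (all two-qubit) therefore generates at most $\sum_{u=0}^l \binom{N}{u}\, 15^u$ tensor-network contractions. For fixed $l$ the $u=l$ term dominates, so the total count is $\bigO{15^l N^l}$, as claimed.

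The algebra of the expansion and the counting are routine; the only subtlety I anticipate is bookkeeping for constants when applying Lemma~2 uniformly across noises of different arity, so that a single factor of $16p$ covers both single- and two-qubit channels and the bound $\|M'_{\e_{i_s}}\| \le 1+16p$ survives the triangle inequality after absorbing the extra $p$. Once that is in place, the passage from the operator-norm bound on the $2n$-qubit map $M_{\e_\N}-A(l)$ to the scalar bound on $|G-G'|$ is just norm duality, and the remaining steps reduce to submultiplicativity, the triangle inequality, and a binomial count.
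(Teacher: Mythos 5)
Your proposal is correct and follows essentially the same route as the paper: multilinear expansion of the product into the residual-order terms $T_u$, a triangle-inequality/submultiplicativity bound on each $\|T_u\|$ using Lemma~2, passage to $|G-G'|$ via the unit vectors $\ket{\psi}\otimes\ket{\psi^*}$ and $\ket{v}\otimes\ket{v^*}$, and the binomial count $\sum_{u\le l}\binom{N}{u}15^u$ for the contractions. The only (immaterial) difference is that the paper first bounds single- and two-qubit sites separately with factors $4p$ and $16p$ before coarsening to the uniform $16p$ via Vandermonde, whereas you apply the worst-case $16p$ from the outset.
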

\begin{proof} We first observe:
	\begin{equation*}
		\begin{aligned}
			     & \|M_{\e_N} \cdots M_{\e_1} - A(l)\|
			= \|\sum_{u = l+1}^{N} T_u\|\leq \sum_{u = l+1}^{N} \|T_u\|                                             \\
			\leq & \sum_{u = l+1}^{N} \sum_{i_1=0}^{u} \binom{N_1}{i_1} \binom{N_2}{u-i_1} (4p)^{i_1} (16p)^{u-i_1} \\
			     & (1+4p)^{N_1 - i_1} (1+16p)^{N_2 - (u-i_1)}                                                       \\
			< & \sum_{u = l+1}^{N} \sum_{i_1=0}^{u} \binom{N_1}{i_1} \binom{N_2}{u-i_1} (16p)^{u}(1+16p)^{N-u} 	\\
            = & \sum_{u = l+1}^{N} \binom{N}{u} (16p)^{u} (1+16p)^{N - u}     
            \end{aligned}
	\end{equation*}
	Therefore
	\begin{align*}
		     & \|G - G'\|                                                                                               \\
		=    & |\bra{ v }\otimes\bra{ v ^*}(M_{\e_N} \cdots M_{\e_1} - A(l)) \ket{\psi}\otimes\ket{\psi^*}|             \\
		\leq & \|\ket{ v }\otimes\bra{ v ^*}\| \|(M_{\e_N} \cdots M_{\e_1} - A(l)) \ket{\psi}\otimes\ket{\psi^*}\|      \\
		\leq & \|(M_{\e_N} \cdots M_{\e_1} - A(l))\|                                                               &  & \\
		=    & \sum_{u = l+1}^{N} \sum_{i_1=0}^{u} \binom{N_1}{i_1} \binom{N_2}{u-i_1} (4p)^{i_1} (16p)^{u-i_1}         \\
		     & (1+4p)^{N_1 - i_1} (1+16p)^{N_2 - (u-i_1)}
	\end{align*}
	Fixing all but $u$ factors into the dominant form results in a contraction number of $2 \sum_{u = 0}^{l} \sum_{i_1=0}^{u} \binom{N_1}{i_1} \binom{N_2}{u-i_1} 3^{i_1} 15^{u-i_1} \leq 2\sum_{u = 0}^{l} \binom{N}{u} 15^{u}=\bigO{15^lN^l}$.  \qedhere
\end{proof}
}
\revision{
Specifically, when using the one-level approximation where the number of single-qubit noises is $N_1$, two-qubit noises is $N_2$, and the noise rate is $p$, Theorem~\ref{thm: precision} asserts that our method has accuracy $\sum_{u=2}^{N} \sum_{i_1=0}^{u} \binom{N_1}{i_1} \binom{N_2}{u-i_1} (4p)^{i_1} (16p)^{u-i_1} (1+4p)^{N_1 - i_1} (1+16p)^{N_2 - (u-i_1)}$. With the assumption that $p \leq \frac{1}{32N}$, we have:
\[
	\begin{aligned}
		     & \sum_{k=2}^N \sum_{i_1=0}^{k} \binom{N_1}{i_1} \binom{N_2}{k-i_1} (4p)^{i_1} (16p)^{k-i_1} \\
		     & (1+4p)^{N_1 - i_1} (1+16p)^{N_2 - (k-i_1)}                                                 \\
		\leq & \sum\limits_{k=2}^N \tbinom{N}{k}(16p)^k(1+16p)^{N-k}                                      \\
		\leq & (1+\frac{1}{2N})^N \sum\limits_{k=2}^N (16Np)^k                                            \\
		\leq & \sqrt{e} (16Np)^2 \frac{1-(16Np)^{N-1}}{1-16Np}                                            \\
		\leq & 512\sqrt{e}N^2p^2 = \bigO{N^2 p^2}
	\end{aligned}
\]
As a comparison, the quantum trajectories method~\cite{isakov2021simulations, ddsim_noise} needs $r$ samples (tensor network contraction) to obtain a result within accuracy $\bigO{1/\sqrt{r}}$ under a constant probability. For the quantum trajectories method to achieve such accuracy (under a constant probability), $N^2 p^2 = \frac{C}{\sqrt{r}}$. Therefore the sample number $r = \frac{C^2}{N^4 p^4}$, where $C$ is a constant number. Note that our level-1 approximation needs $\bigO{N}$ samples, and therefore our approximation method will need less sampling number than the Monte-Carlo method when $p = \bigO{N^{-\frac{5}{4}}}$.
}
\section{Applications in Equivalence Checking}\label{sec:eqc}
The method developed in Section~\ref{sec:approximation_algorithm} can also be applied to equivalence checking of noisy quantum circuits. Since realistic circuits are affected by noise, instead of determining whether two circuits are exactly equivalent, it is more practical to verify whether they are approximately equivalent to a certain degree. We quantify the degree of equivalence using the Jamiołkowski fidelity between their corresponding super-operators~\cite{gilchrist2005distance}. Let $|\Psi\rangle=\frac{1}{\sqrt{d}}\sum_i |ii\rangle$ be the maximally entangled state. For a super-operator $\mathcal{E}$ on $\mathcal{H}$, define $\rho_{\mathcal{E}}=(\mathcal{I}\otimes\mathcal{E})(|\Psi\rangle\langle\Psi|)$, where $\mathcal{I}$ is the identity super-operator on $\mathcal{H}$.
\begin{definition}[Jamiołkowski fidelity]
  The Jamiołkowski fidelity between super-operators $\mathcal{E}$ and $\mathcal{F}$ is
  \[
    F_J(\mathcal{E}, \mathcal{F}) \;=\; F\!\left(\rho_{\mathcal{E}}, \rho_{\mathcal{F}}\right).
  \]
\end{definition}
As shown in~\cite{hong2021approximate}, when one channel is unitary $\mathcal{U}=\{U\}$, the fidelity admits the closed form
\[
  F_J(\mathcal{U}, \e_{\N})
  \;=\; \frac{1}{2^{2N_q}}\operatorname{Tr}\!\big((U^\dagger\!\otimes\! U^T)\,M_\e\big),
\]
where $N_q$ is the number of qubits of the circuit. This quantity can be computed by contracting a tensor network with twice the number of qubits as the original circuit. Figure~\ref{fig:AEC_example} illustrates the tensor network for the circuit in Fig.~\ref{fig:circuit_example}. Based on this representation, our approximation technique in Section~\ref{sec:approximation_algorithm} yields a practical equivalence-checking algorithm; see Algorithm~\ref{Algorithm:ApproximateEC}. Its precision and complexity are analyzed in Theorem~\ref{thm:AEC}.

\begin{figure}
	\centering
	\includegraphics[]{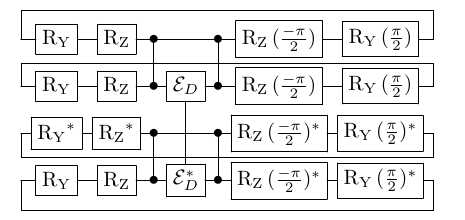}
	\caption{Tensor network for equivalence checking of the 2-qubit QAOA circuit in Fig.~\ref{fig:circuit_example} with decoherence noise $\e_{D}$, and redundant gates are canceled out.}
	\label{fig:AEC_example}
\end{figure}

\begin{algorithm}[htbp]
\caption{ApproximationEquivalenceChecking($\e_{\N}, U, l$)}
\label{Algorithm:ApproximateEC}
{\small{
\begin{algorithmic}[1]
  \Require An ideal circuit $U=U_d\circ\cdots\circ U_1$, and a noisy circuit $\e_{\N}=\e_d\circ\cdots\circ\e_1$ with $N_1$ single-qubit noises and $N_2$ two-qubit noises, where the single-qubit noise channels are indexed by $\{i_1^1,\ldots,i_{N_1}^1\}$ with Kraus sets $\e_{i_s^1}=\{E_{sk}\}_{k\in\k_s}$ for $1\le s\le N_1$, and the two-qubit noise channels by $\{i_1^2,\ldots,i_{N_2}^2\}$ with Kraus sets $\e_{i_t^2}=\{E_{tk}\}_{k\in\k_t}$ for $1\le t\le N_2$.
  \Ensure $F(l)$, the $l$-level approximation of $F_J(\mathcal{U}, \e_{\N})$.
  \State Result $\gets 0$
  \State Generate the tensor network similar to Algorithm~\ref{Algorithm:Approximate}, but connect the inputs and outputs of each doubled wire to take the trace.
  \State For each noise $M_{\e_{i_s}}$, perform reshuffling and SVD to obtain $U_k^s,V_k^s$ so that
         $M_{\e_{i_s^1}} = \sum_{k=0}^{3} U_k^s\otimes V_k^s$ for single-qubit $s\in\{1,\ldots,N_1\}$,
         and $M_{\e_{i_t^2}} = \sum_{k=0}^{15} U_k^t\otimes V_k^t$ for two-qubit $t\in\{1,\ldots,N_2\}$.
  \ForAll {$0 \le u \le l$}
    \State $T_u \gets 0$
    \ForAll{$1 \le p_1 < \cdots < p_u \le N$}
      \State Substitute $M_{\e_{i_{p_j}}}$ with one residual term $U_m^{\,i_{p_j}}\otimes V_m^{\,i_{p_j}}$
             (with $m=1,\ldots,3$ if $i_{p_j}$ is single-qubit, or $m=1,\ldots,15$ if two-qubit),
             and substitute every other $M_{\e_{i_s}}$ with $U_0^{\,s}\otimes V_0^{\,s}$.
      \State The double-sized network splits into two networks; contract both and multiply the results to obtain $X$.
      \State $T_u \gets T_u + X$
    \EndFor
    \State Result $\gets$ Result $+ T_u$
  \EndFor
  \State \Return Result
\end{algorithmic}
}}
\end{algorithm}

\revision{
\begin{thm}[Precision and Complexity of Algorithm~\ref{Algorithm:ApproximateEC}]
	\label{thm:AEC}
Let $\e_{\N}$ be a noisy circuit with $d$ gates, $N_1$ single-qubit noises and $N_2$ two-qubit noises, and suppose all noise rates satisfy $\|M_{\e_{i_s}}-I\|\le p$. Let $N=N_1+N_2$. Then
\[
  \big|F_J(\mathcal{U}, \e_{\N}) - F(l)\big|
  \;\le\; \sum_{u=l+1}^{N} \binom{N}{u} (16p)^{u}\,(1+16p)^{N-u},
\]
and the number of tensor-network contractions is $\bigO{15^lN^l}$.
\end{thm}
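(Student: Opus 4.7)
The plan is to reduce Theorem~\ref{thm:AEC} to the operator-norm estimate already established in the proof of Theorem~\ref{thm: precision}, exploiting the closed-form expression for the Jamio{\l}kowski fidelity between a unitary channel and a noisy channel. First I would observe that Algorithm~\ref{Algorithm:ApproximateEC} simply contracts the same tensor network that computes $F_J(\mathcal{U},\e_{\N})=\tfrac{1}{2^{2N_q}}\operatorname{Tr}\!\big((U^\dagger\otimes U^T)\,M_{\e_{\N}}\big)$, except that each noise $M_{\e_{i_s}}$ is replaced by its SVD expansion and only terms of order $\le l$ in the residuals $\overline{M_{\e_{i_s}}}$ are kept. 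By the linearity of the trace, this yields the identity $F(l)=\tfrac{1}{2^{2N_q}}\operatorname{Tr}\!\big((U^\dagger\otimes U^T)\,A(l)\big)$, where $A(l)=\sum_{u=0}^{l}T_u$ is exactly the $l$-level approximation of $M_{\e_{\N}}$ introduced in Section~\ref{sec:approximation_algorithm}.

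Subtracting and using linearity again,
\[
  F_J(\mathcal{U},\e_{\N})-F(l)=\tfrac{1}{2^{2N_q}}\operatorname{Tr}\!\Big((U^\dagger\otimes U^T)\big(M_{\e_{\N}}-A(l)\big)\Big).
\]
The next step is to bound this trace via the matrix H\"older inequality $|\operatorname{Tr}(XY)|\le\|X\|_1\|Y\|$, where $\|\cdot\|_1$ denotes the trace norm and $\|\cdot\|$ is the spectral norm used throughout the paper. Since $U\otimes U^*$ is unitary on a $2^{2N_q}$-dimensional space, so is $U^\dagger\otimes U^T$, and hence $\|U^\dagger\otimes U^T\|_1=2^{2N_q}$. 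This cancels the $1/2^{2N_q}$ prefactor exactly, yielding
\[
  |F_J(\mathcal{U},\e_{\N})-F(l)|\;\le\;\|M_{\e_{\N}}-A(l)\|.
\]
The remaining norm is precisely the quantity already estimated inside the proof of Theorem~\ref{thm: precision}, which gave $\|M_{\e_{\N}}-A(l)\|\le\sum_{u=l+1}^{N}\binom{N}{u}(16p)^{u}(1+16p)^{N-u}$; invoking it finishes the accuracy bound.

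For the complexity count, I would note that Algorithm~\ref{Algorithm:ApproximateEC} has the same combinatorial structure as Algorithm~\ref{Algorithm:Approximate}: for each $0\le u\le l$ it picks $u$ noises to replace by residual factors and every other noise by the dominant factor, yielding $\binom{N_1}{i_1}\binom{N_2}{u-i_1}3^{i_1}15^{u-i_1}$ contractions for each split of $u$ between single- and two-qubit noises. Summing over $u$ and using $\binom{N_1}{i_1}\binom{N_2}{u-i_1}\le\binom{N}{u}$ together with $3^{i_1}15^{u-i_1}\le15^{u}$ reproduces the $\mathcal{O}(15^{l}N^{l})$ estimate. The only genuine subtlety is justifying the trace-H\"older step together with the identification $\|U^\dagger\otimes U^T\|_1=2^{2N_q}$, so that the prefactor $1/2^{2N_q}$ is canceled cleanly; everything else is a direct transcription of the analysis for Theorem~\ref{thm: precision}.
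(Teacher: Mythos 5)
Your proposal is correct and follows essentially the same route as the paper: both reduce $|F_J(\mathcal{U},\e_{\N})-F(l)|$ to the spectral-norm bound on $M_{\e_{\N}}-A(l)$ already established in the proof of Theorem~\ref{thm: precision}, and the contraction count is carried over unchanged. The only (harmless) difference is the intermediate inequality — you invoke Schatten--H\"older duality $|\tr(XY)|\le\|X\|_1\,\|Y\|$ with $\|U^\dagger\otimes U^T\|_1=2^{2N_q}$, while the paper applies Cauchy--Schwarz in the Hilbert--Schmidt inner product (using $\|U\otimes U^*\|_F=2^{N_q}$) and then converts the Frobenius norm back to the spectral norm; both arrive at the same bound $\|M_{\e_{\N}}-A(l)\|$.
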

\begin{proof}
$$
\begin{aligned}
  & |\operatorname{Tr}((U^{\dagger} \otimes U^{T}) M_{\e})-\operatorname{Tr}((U^{\dagger} \otimes U^{T}) A(l))| \\
= & |\operatorname{Tr}((U^{\dagger} \otimes U^{T})(M_{\e}-A(l)))|.                                               \\
\end{aligned}
$$
By the Cauchy-Schwarz inequality, we have:
$$
\begin{aligned}
     & |\operatorname{Tr}((U^{\dagger} \otimes U^{T})(M_{\e}-A(l)))| \\
=    & |\langle U \otimes U^*, M_{\e}-A(l)\rangle_{HS}|                  \\
\leq & \|U \otimes U^*\|_F \|M_{\e}-A(l)\|_F                        \\
=    & 2^{ N_q}\|M_{\e}-A(l)\|_F
\end{aligned}
$$
Therefore we have
$$
\begin{aligned}
     & |F_J(\mathcal{U}, \mathcal{\e_{\N}}) - F(l)|                                                                                                      \\
\leq & \frac{1}{2^{2N_q}}\cdot 2^{ N} \|M_{\e}-A(l)\|_F                                                                                                      \\
\leq & \frac{1}{2^{N_q}} \|M_{\e}-A(l)\|_F \\
\leq & \|M_{\e}-A(l)\|
\end{aligned}
$$
	Thus, the precision and total contraction number are the same as analyzed in Theorem~\ref{thm: precision}.
\end{proof}
}
\section{Experiments}\label{sec:experiments}
In this section, we demonstrate the utility and effectiveness of our approximation approach. We evaluate our methods by performing noisy simulation and equivalence checking on various quantum circuits with realistic noise models. In particular, we compare our algorithm with state-of-the-art methods for the same tasks, and further numerically analyze our algorithm in terms of efficiency, accuracy, approximation levels, and noise rate. The code of our implementation can be accessed at the GitHub repository. \footnote{\href{https://github.com/Veri-Q/NoiseSim}{https://github.com/Veri-Q/NoiseSim}}.

\subsection{Experiment Configurations:}
\subsubsection{Hardware and Software Tools} All our experiments are carried out on a server with Intel Xeon Platinum 8153 @ $2.00 \mathrm{GHz} \times 256$ Cores, 2048 GB Memory. The machine runs CentOS 7.7.1908. We use the Google TensorNetwork Python package~\cite{roberts2019tensornetwork} for the tensor network computation. The memory out (MO) limit is capped at 2048 GB.

\subsubsection{Benchmark Circuits}
Our benchmark circuits consist of five types of quantum circuits: Bernstein-Vazirani (BV) Algorithm, Quantum Fourier Transform (QFT), Quantum Approximate Optimization Algorithm (QAOA), Hartree-Fock Variational Quantum Eigensolver (VQE), and random quantum circuits exhibiting quantum supremacy from Google (inst). Google has experimentally run the last three types of quantum circuits on their quantum processors. In our experiments, the benchmark circuits for BV and QFT are taken from~\cite{chen2022veriqbench}, and the benchmark circuits for VQE, QAOA and inst are taken from ReCirq~\cite{quantum_ai_team_and_collaborators_2020_4091470}, an open-source library for Cirq and Google's Quantum Computing Service.

\subsubsection{Noise Models}\label{subsubsec:noise_model}
Our method can handle any noise represented by a super-operator. To evaluate its performance, we first consider two single-qubit noise models. The first is depolarizing noise, a fundamental noise model commonly evaluated in most methods. This allows for a fair comparison of performance across different approaches. The other is the decoherence noise explained in Section~\ref{sec:pre}. In our experiments, the default decoherence times are set to $T_1 = 200\,\mu\text{s}$ and $T_2 = 30\,\mu\text{s}$. The decoherence noise is applied after a randomly selected gate in the quantum circuit. While more complex than depolarizing noise, the decoherence noise model approximates the real noise effects in superconducting quantum circuits. 

\revision{To evaluate the performance of our approximation algorithm under NISQ noise models, we compare it with the quantum trajectories method by simulating circuits that incorporate single-qubit depolarizing noise, ZZ-crosstalk noise, and two-qubit depolarizing crosstalk noise, as described in Section~\ref{sec:pre}. The circuits are mapped to the ibm\_brisbane backend, with crosstalk noises applied according to the device topology. In each layer of quantum gates, for any connected qubit pair \(q_i\) and \(q_j\), if a gate is applied to one while the other idles, a ZZ gate with a parameter \(\theta\) drawn uniformly from [-0.1, 0.1] radians is applied to \(q_i\) and \(q_j\). Single-qubit and two-qubit depolarizing noises with \(p=0.0001\) are applied after each respective gate. By evaluating our approximation algorithm on benchmark circuits with these noise models, we obtain a practical assessment of its accuracy and efficiency in simulating the implementation of quantum algorithms on NISQ devices.}

\subsubsection{Parallel Optimization} Our approximation algorithm exhibits inherent parallelism as multiple subtasks of contraction can be executed independently. We implement parallel computing through Python's multiprocessing package to accelerate result computation. For quantum circuits with limited scale (e.g., fewer qubits and gates), this parallelization strategy achieves significant speedup by fully utilizing multi-core CPU resources. However, when handling large-scale quantum circuits, the memory consumption of each subtask approaches the system's physical memory limit, causing diminishing returns from parallelization as computational resources become saturated. While utilizing GPU could enhance performance, the critical memory (VRAM) bottleneck makes CPU-based computation our preferred choice, as it provides substantially larger addressable memory space to support the simulation of larger tensor networks. Finally, we employ a heuristic strategy that adjusts the number of processes based on the number of qubits and gates in the circuit, balancing efficiency with memory limitations.

\subsubsection{Baselines}
To evaluate the performance of our approximation algorithm, we compare it against state-of-the-art methods for simulating and verifying the equivalence of noisy quantum circuits. These baseline methods fall into three categories, each distinguished by its underlying data structure:
\begin{itemize}
	\item \textbf{MM (Matrix Multiplication)-based methods:} MM-based methods represent quantum states, gates, and noise operators as matrices, with simulation performed via matrix multiplications. For this comparison, we adopt Qiskit’s \texttt{evolve} method as the representative MM-based approach.
	\item \textbf{DD (Decision Diagram)-based methods:} DD-based methods utilize tensor-like data structures to encode quantum states, gates, and noise operators. We evaluate two DD-based approaches: (1) the QMDD-based noise-aware simulator DDSIM~\cite{ddsim_noise}, which adapts the stochastic quantum trajectory method from~\cite{ddsim_noise} for noisy circuit simulation, and (2) the Tensor Decision Diagram (TDD)-based method~\cite{hongxinTDD}, modified from~\cite{hong2021approximate}. Additionally, we include the TDD-based equivalence checking tool from~\cite{hong2021approximate}.
	\item \textbf{TN (Tensor Network)-based methods:} Tensor network-based methods are elaborated in Sections~\ref{sec:noisy_simulation},~\ref{sec:approximation_algorithm}, and~\ref{sec:eqc}. Here, 'TN' refers to the implementation of Algorithm~\ref{Algorithm:Series}, while 'Ours' denotes the approximation algorithms introduced in Sections~\ref{sec:approximation_algorithm} and~\ref{sec:eqc}. For comparison, we also implement a basic MPO-based method~\cite{woolfe2015matrix} using the same backend as our proposed methods.
\end{itemize}


\subsection{Experimental Comparison with Baseline Tools:}

\begin{table*}[ht]
	\centering
	\caption{\revision{\textsc{Our Algorithm vs. Accurate Methods (Execution Time)}}}
	\includegraphics[width=0.9\linewidth]{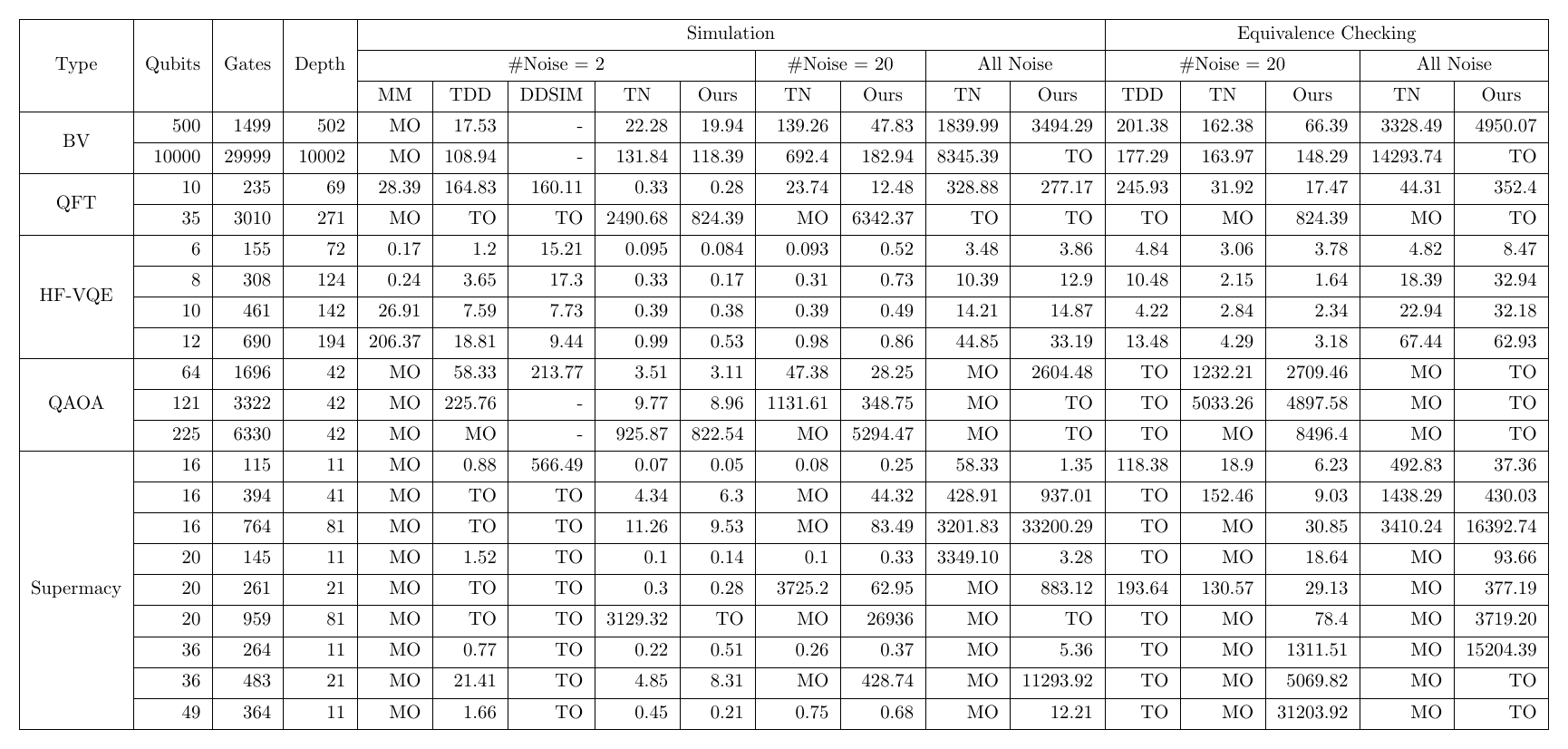}
	\label{table:vqe}
\end{table*}



We compare our simulator with baseline simulators, presenting the results in Tables~\ref{table:vqe} and~\ref{table:mps_compare}. All methods are initially evaluated using a depolarizing noise model with $\#\text{Noise} = 2$, a noise type supported by all simulators, and a timeout threshold of 3,600 seconds. For our method, we employ a level-1 approximation; the DDSIM simulator uses 30,000 samples. The noise-aware DDSIM simulator supports circuits up to 100 qubits; unsupported circuits are denoted by '-'. As shown in Table~\ref{table:vqe}, tensor network (TN)-based methods (TN and ours) outperform matrix multiplication (MM)- and decision diagram (DD)-based methods across most benchmark circuits, including QFT, VQE, QAOA, and Supremacy. Given that our approximation algorithm extends the TN-based approach, their efficiencies are comparable at this low noise level. These findings indicate that TN-based methods are more effective for simulating quantum circuits with practical applications, such as quantum machine learning (VQE and QAOA).

Next, we focus on comparing our method with TN-based simulators under higher noise levels, where TN-based methods struggle. To demonstrate this, we increase the number of noise instances to 20, adopt a decoherence noise model, and extend the timeout threshold to 36,000 seconds to accommodate the increased computational demand. The results, reported in the \#Noise = 20 columns of Table~\ref{table:vqe}, show that while the TN-based method often fails for circuits with larger noise levels, our approximation algorithm maintains strong performance. \revision{We further evaluate the methods on circuits where each gate is followed by a noise operator, representing a realistic scenario on current NISQ devices. In this case, both our method and the TN-based method fail on circuits with a large number of qubits, but our method demonstrates greater capability. We also assess equivalence checking, comparing our method with TDD and TN-based approaches. These findings confirm that our approximation algorithm surpasses other methods in efficiency for circuits with a larger number of noises.}

\revision{We also evaluate the memory consumption of our method compared to the TDD method. Figure~\ref{fig:memory} illustrates the memory usage relative to the percentage of execution time. As shown in the figure, when simulating the 100-qubit QAOA circuit with 20 noise instances, our method consumes less memory during most of the execution period, with memory usage rising toward the end, whereas the TDD method consumes more memory from the beginning and remains high throughout the execution. This demonstrates that the TN backend prioritizes low-cost contractions first and handles the high-cost contractions at the end, making the computation more efficient.}

\begin{figure}[htbp]
\centering
\includegraphics[width=\linewidth]{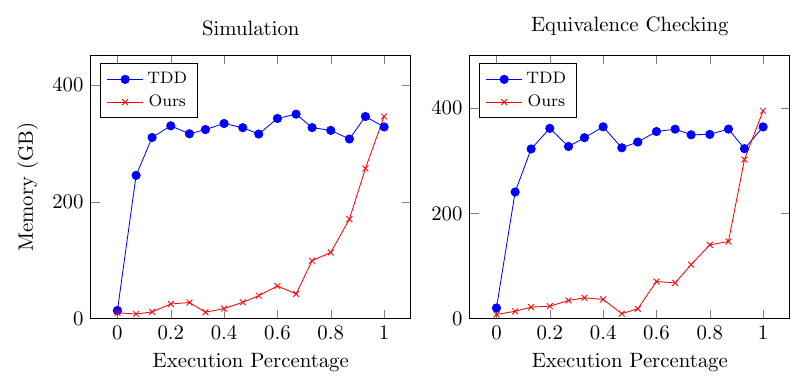}
\caption{\revision{Memory usage of our method and TDD-based method during the simulation equivalence checking}}
\label{fig:memory}	
\end{figure}

Additionally, we compare our approximation method with the MPO-based method, with results detailed in TABLE~\ref{table:mps_compare} for complex quantum circuits. The precision and runtime of the MPO-based method depend on its bond dimension, which we tuned to match the runtime of our algorithm. According to TABLE~\ref{table:mps_compare}, our method achieves higher precision than the MPO-based approach under similar execution times. This indicates that our approximation algorithm offers a superior trade-off between efficiency and precision compared to the MPO-based method.

\subsection{Comparison with the Quantum Trajectories Method:}

Our method is not limited to TN-based simulation. By employing singular value decomposition (SVD) to express noise operators as matrices, we decompose the computational tasks of noisy simulation and equivalence checking into multiple subtasks. These subtasks can be executed using various techniques, allowing our method---akin to the quantum trajectories approach---to offer a versatile approximation scheme compatible with diverse backends. In this section, we compare our approximation scheme with the quantum trajectories method, emphasizing efficiency and precision.

The runtime of the quantum trajectories method depends on the error bound and the probability that the result falls within this bound~\cite{isakov2021simulations, ddsim_noise}. To ensure a fair comparison, we set the success probability of the quantum trajectories method to 99\%. Using Hoeffding's inequality, achieving an accuracy of $\delta$ with a probability exceeding 99\% requires the quantum trajectories method to use $\frac{\ln(100)}{2\delta^2}$ samples, whereas our method, at approximation level 1, needs only $6n + 2$ samples, where $n$ is the noise number. We evaluated the sample requirements for the quantum trajectories method to match the error bound of our algorithm at approximation level 1 across different noise rates $p$. For $p = 0.001$ and $p = 0.0001$, with $n$ ranging from 10 to 40, our method surpasses the quantum trajectories method when $n \leq 26$ at $p = 0.001$, and consistently for all $n$ at $p = 0.0001$ (see Fig.~\ref{fig:monte}). This analysis extends to both simulation and equivalence checking tasks.

\begin{figure}[htbp]
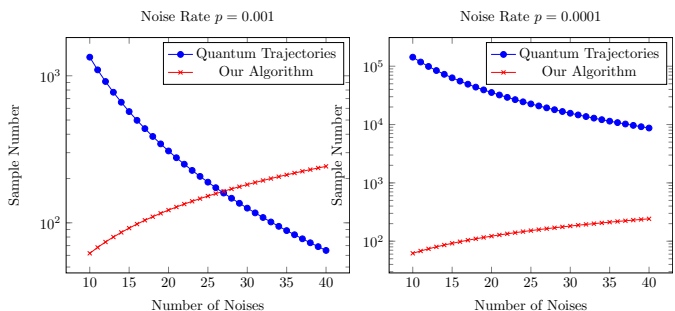

	\begin{minipage}{0.24\textwidth}
		\tikzinput{comparison1}{figures/comparison1.tex}
	\end{minipage}
	\begin{minipage}{0.24\textwidth}
		\tikzinput{comparison2}{figures/comparison2.tex}
	\end{minipage}
	\caption{Comparison of sample number required for the same error bound}
	\label{fig:monte}
\end{figure}

Furthermore, we performed numerical experiments to compare the efficiency and precision of our method with the quantum trajectories method, with results presented in Table~\ref{table:traj_num}. The quantum trajectories method was implemented in two variants: an MM-based simulator~\cite{isakov2021simulations} and a TN-based simulator. Using a TN backend for the quantum trajectories method ensures a fair comparison by aligning the underlying framework with our approach. Both experiments employed a depolarizing noise model with a noise number of 20 and a noise rate of $p = 0.001$. We tuned the sample size of the quantum trajectories method to achieve the same precision as our level-1 approximation. As shown in Table~\ref{table:traj_num}, our method outperforms the quantum trajectories method (denoted "Traj") in efficiency at equivalent precision levels. Notably, our algorithm surpasses the TN-based quantum trajectories method, despite sharing the same backend, highlighting the superior efficiency of our underlying scheme.


\begin{table*}[ht]
	\centering
	\caption{\textsc{Our Algorithm v.s. MPO-based Methods}}
	\includegraphics[width=\textwidth]{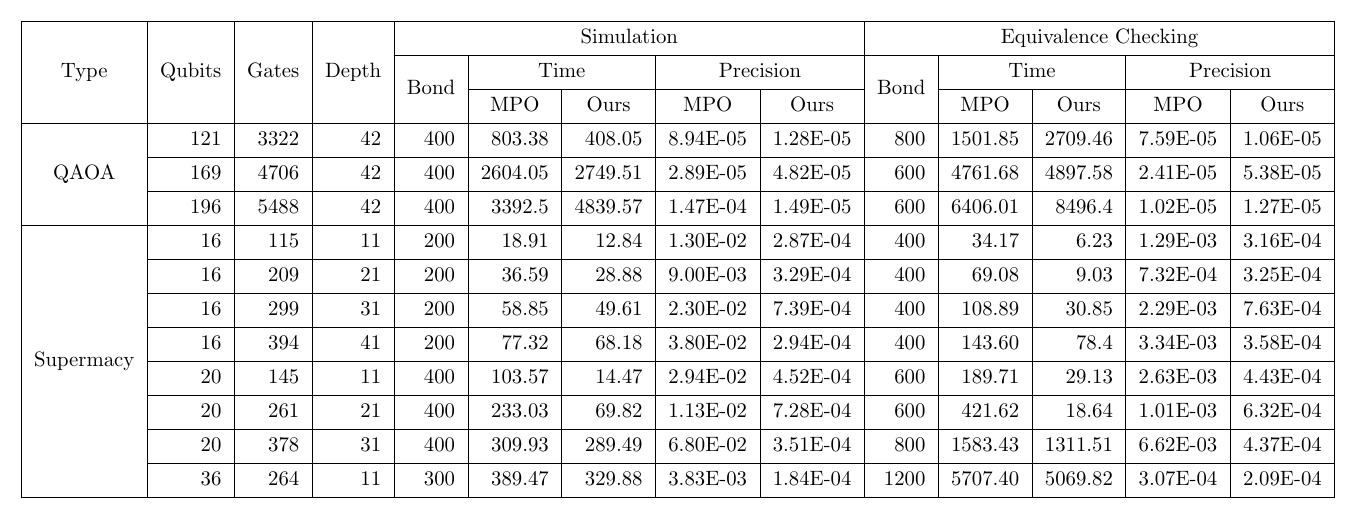}
	\label{table:mps_compare}
\end{table*}
\begin{table}
\centering
\caption{\textsc{Our Algorithm v.s. Approximate Methods}}
\label{table:traj_num}
\resizebox{\columnwidth}{!}{
\begin{tabular}{|c|r|r|r|r|r|r|}
\hline
\multirow{2}{*}{Circuit} & \multicolumn{3}{c|}{Precision}                                                              & \multicolumn{3}{c|}{Runtime}                                                                \\
\cline{2-7}
                         & \multicolumn{1}{c|}{{Ours}} & \multicolumn{1}{c|}{Traj (MM)} & \multicolumn{1}{c|}{Traj (TN)} & \multicolumn{1}{c|}{{Ours}} & \multicolumn{1}{c|}{Traj (MM)} & \multicolumn{1}{c|}{Traj (TN)} \\
\hline
QAOA\_6                  & {1.55E-04}                  & 1.43E-04                       & 1.80E-04                       & {4.56}                      & 1.49                           & 10.84                          \\
\hline
QAOA\_10                 & {5.96E-05}                  & 1.98E-04                       & 2.00E-04                       & {7.57}                      & 10.94                          & 19.68                          \\
\hline
QAOA\_15                 & {9.03E-05}                  & 2.55E-04                       & 2.66E-04                       & {8.41}                      & 658.38                         & 25.48                          \\
\hline
QAOA\_64                 & {5.82E-07}                  & MO                             & 3.48E-07                       & {191.47}                    & MO                             & 598.62                         \\
\hline
QAOA\_100                & {1.03E-07}                  & MO                             & 1.30E-07                       & {361.79}                    & MO                             & 1264.67                        \\
\hline
\end{tabular}%
} 
\end{table}

\begin{table}
\centering
\caption{\textsc{Accuracy for Different Approximation Levels}}
\label{table:merged_approximate_level}
\begin{tabular}{|c|r|r|r|r|} 
\hline
\multirow{2}{*}{Level} & \multicolumn{2}{c|}{Simulation}                            & \multicolumn{2}{c|}{Equivalence Checking}                   \\ 
\cline{2-5}
                       & \multicolumn{1}{c|}{Time (s)} & \multicolumn{1}{c|}{Error} & \multicolumn{1}{c|}{Time (s)} & \multicolumn{1}{c|}{Error}  \\ 
\hline
0                      & 0.34                          & 4.59E-03                   & 1.74                          & 1.33E-02                    \\ 
\hline
1                      & 11.18                         & 3.02E-05                   & 39.48                         & 8.86E-05                    \\ 
\hline
2                      & 109.95                        & 1.23E-06                   & 183.96                        & 3.60E-06                    \\ 
\hline
3                      & 1971.37                       & 1.13E-06                   & 3029.5                        & 3.28E-06                    \\
\hline
\end{tabular}
\end{table}

\subsection{Results on Noise}

\subsubsection{Results on Noise Number} Furthermore, to highlight the high efficiency of our approximation algorithm with respect to the number of noise instances, we simulate and check the equivalence of the 100-qubit QAOA circuit with 0 to 80 noise instances, as shown in Fig.~\ref{fig:runtime}. Our approximation algorithm handles all cases, while the TN-based method runs out of memory once the number of noise instances reaches 30. The primary reason for this memory failure is that increasing the number of noise instances can increase the maximum rank of the nodes in the tensor network contraction, leading to higher time and memory consumption during contraction. In comparison, when using the level-1 approximation of Algorithm~\ref{Algorithm:Approximate}, the runtime of our approximation algorithm increases almost linearly with the number of noise instances, as shown in Fig.~\ref{fig:runtime}, which is consistent with Theorem.~\ref{thm: precision}.

\begin{figure}[!htbp]
	\centering
	\includegraphics[width=0.5\textwidth]{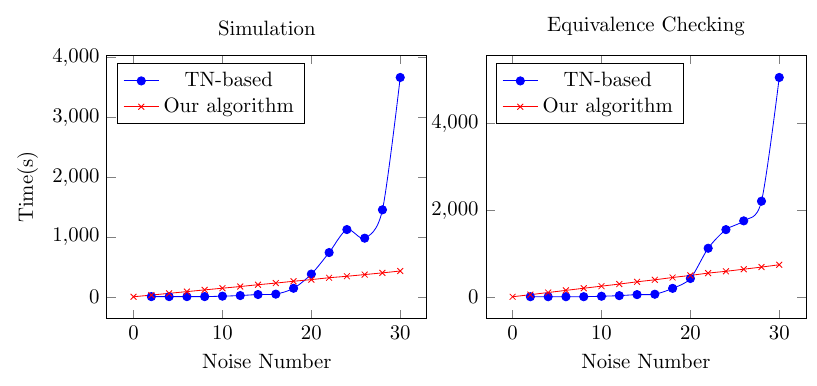}
	\caption{The number of noise and runtime}
	\label{fig:runtime}
\end{figure}

\subsubsection{Results on Noise Rate} The accuracy of our simulation and equivalence checking algorithm depends on the noise rate of the noises, i.e., $p = \|M_{\e_D} - I\|$. When $p$ is small, the algorithm has high accuracy. We evaluated our algorithms under both the realistic model and the depolarizing noise model. In both cases, the results as shown in Fig.~\ref{fig:noise_rate_sim} and~\ref{fig:noise_rate_eqc} demonstrate that as the noise rate increases, the approximation error also rises. It confirms that our approximation algorithm has better accuracy for lower noise rates in realistic noise models, which indicates a promising outlook for achieving higher precision on advanced hardware implementations in the future.

\subsubsection{Results on Approximation Levels} Our algorithms offers varying levels of approximation, presenting a trade-off between computational efficiency and accuracy. As a demonstration, we simulate and check equivalence of the 64-qubit QAOA circuit with 10 noises. In simulation task the input state $\ket{\psi} = \ket{0}\otimes\cdots\otimes\ket{0}$ and $\ket{v}=U\ket{0}\otimes\cdots\otimes\ket{0}$, where $U$ represents the unitary operator of the ideal circuit. Table.~\ref{table:merged_approximate_level} shows the results for the same circuit with approximation levels from 0 to 3. From the result, we can see that the accuracy is getting higher as expected, but the cost for an extra level of approximation is also significant. For most scenarios, the level-1 approximation is recommended since it can achieve a good balance between runtime and accuracy.



\begin{table*}[!htb]
    \centering
    \caption{\textsc{NISQ Noise Simulation and Equivalence Checking}}
    \label{table:crosstalk}
    \begin{tabular}{|c|r|r|r|r|r|r|l|r|r|r|}
        \hline
        \multirow{3}{*}{Circuit} & \multicolumn{5}{c|}{Simulation} & \multicolumn{5}{c|}{Equivalence Checking}                                                                                                                                                                                                                               \\
        \cline{2-11}
                                 & \multicolumn{2}{c|}{Precision}  & \multicolumn{3}{c|}{Time (s)}             & \multicolumn{2}{c|}{Precision} & \multicolumn{3}{c|}{Time (s)}                                                                                                                                                                       \\
        \cline{2-11}
                                 & \multicolumn{1}{c|}{Traj}       & \multicolumn{1}{c|}{Ours}                 & \multicolumn{1}{c|}{TN}        & \multicolumn{1}{c|}{Traj}     & \multicolumn{1}{c|}{Ours} & \multicolumn{1}{c|}{Traj} & \multicolumn{1}{c|}{Ours} & \multicolumn{1}{c|}{TN} & \multicolumn{1}{c|}{Traj} & \multicolumn{1}{c|}{Ours} \\
        \hline
        qft\_10                  & 1.14E-2                         & 6.83E-3                                   & 362.88                         & 1205.43                       & 392.43                    & 3.65E-3                   & 6.59E-3                   & 441.44                  & 2679.16                   & 473.92                    \\
        \hline
        vqe\_10                  & 1.21E-2                         & 6.55E-3                                   & 44.91                          & 190.31                        & 34.94                     & 8.68E-3                   & 1.41E-2                   & 26.18                   & 225.73                    & 38.91                     \\
        \hline
        vqe\_12                  & 1.25E-2                         & 9.25E-3                                   & 70.96                          & 426.44                        & 67.83                     & 1.17E-2                   & 1.59E-2                   & 63.55                   & 470.85                    & 82.31                     \\
        \hline
        inst\_4x4\_10            & 1.54E-2                         & 1.18E-2                                   & 517.79                         & 591.36                        & 103.79                    & 4.80E-3                   & 8.57E-3                   & 2247.35                 & 2384.72                   & 449.10                    \\
        \hline
        inst\_4x4\_20            & 1.05E-2                         & 8.54E-3                                   & 13940.58                       & 12832.67                      & 2837.43                   & 8.67E-3                   & 1.00E-2                   & 18761.50                & 18441.19                  & 3829.20                   \\
        \hline
        inst\_4x5\_10            & 1.26E-2                         & 1.89E-2                                   & 8877.66                        & 12148.67                      & 1802.03                   & 1.84E-2                   & 1.22E-2                   & 12985.97                & 8758.94                   & 2628.17                   \\
        \hline
        inst\_4x5\_20            & 5.07E-3                         & 6.54E-3                                   & TO                             & TO                            & 20398.64                  & 1.09E-2                   & 1.19E-2                   & TO                      & TO                        & 24193.53                  \\
        \hline
    \end{tabular}
\end{table*}

\begin{figure}[!t]
	\centering
	\subfloat[Simulation error\label{fig:noise_rate_sim}]{
		\includegraphics[width=0.23\textwidth]{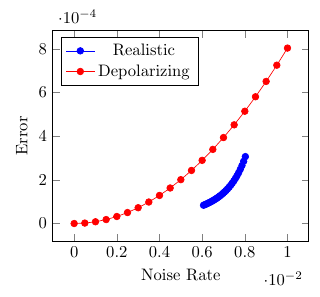}
	}
	\hfill
	\subfloat[Equivalence checking error\label{fig:noise_rate_eqc}]{
		\includegraphics[width=0.23\textwidth]{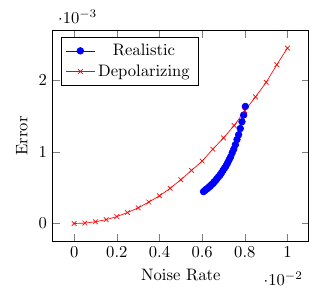}
	}
	\caption{Error at different noise rate}
	\label{fig:noise_rate}
\end{figure}

\subsection{Evaluation on NISQ Noise Model}
\revision{Finally, we evaluate our method on the NISQ noise model as introduced in Section~\ref{subsubsec:noise_model}. We analyzed the runtime and accuracy of our method relative to the quantum trajectories method, using the TN-based method as the baseline for accuracy evaluation. As in the comparison of single-qubit noise, we set the sample number of the quantum trajectories method to achieve a similar error bound as our method. It is worth noting that the NISQ noise model poses a significant challenge to simulation and equivalence checking, as it introduces substantial entanglement into the circuits. The large number of noise operators renders the required sampling numbers impractical for both methods. The results, shown in Table~\ref{table:crosstalk}, indicate that our approximation algorithm effectively handles the crosstalk noise model while achieving a strong balance between accuracy and efficiency. Furthermore, our algorithm proves more efficient than the quantum trajectories method, which demands a large number of samples to reach comparable accuracy.}

\section{Conclusion}

This paper presents a novel algorithm for efficiently simulating and checking the equivalence of noisy quantum circuits. Our approach represents noisy quantum circuits as tensor networks and leverages singular value decomposition (SVD) for approximation, achieving significantly faster performance than existing methods while maintaining the same level of accuracy. The algorithm, implemented using Google's TensorNetwork library, is demonstrated on various benchmark circuits with realistic noise models, including per-gate errors and multi-qubit crosstalk. Its effectiveness is highlighted by the successful simulation and equivalence checking of Quantum Approximate Optimization Algorithm (QAOA) circuits with approximately 200 qubits and 20 noise channels.

Further enhancements could be achieved by optimizing for specific circuit topologies (e.g., QAOA, VQE) to further reduce computational costs. Practically, this work has significant potential to enhance quantum Software Development Kits (SDKs) like Qiskit and Cirq for circuit optimization, hardware benchmarking, and prototyping large-scale quantum machine learning models. Given its scalability, we also anticipate our algorithm's integration into automatic test pattern generation (ATPG) programs~(e.g.,~\cite{fang2008fault,paler2012detection,bera2017detection, chen2024automatic}) for verifying and detecting manufacturing defects in large-scale quantum processors.

\section*{Acknowledgments}
This work was partially supported by the Innovation Program for Quantum Science and Technology (Grant
No. 2024ZD0300500), the Youth Innovation Promotion Association, Chinese Academy of Sciences (Grant No. 2023116), the National
Natural Science Foundation of China (Grant No. 62402485), the Young Elite Scientists Sponsorship Program, China Association for Science and Technology.

\printbibliography

@article{arute2019quantum,
  title={Quantum supremacy using a programmable superconducting processor},
  author={Arute, Frank and Arya, Kunal and Babbush, Ryan and Bacon, Dave and Bardin, Joseph C and Barends, Rami and Biswas, Rupak and Boixo, Sergio and Brandao, Fernando GSL and Buell, David A and others},
  journal={Nature},
  volume={574},
  number={7779},
  pages={505--510},
  year={2019},
  publisher={Nature Publishing Group}
}

@book{wang2006vlsi,
  title={VLSI test principles and architectures: design for testability},
  author={Wang, Laung-Terng and Wu, Cheng-Wen and Wen, Xiaoqing},
  year={2006},
  publisher={Elsevier}
}

@article{preskill2018quantum,
  title={Quantum computing in the {NISQ} era and beyond},
  author={Preskill, John},
  journal={Quantum},
  volume={2},
  pages={79},
  year={2018},
  publisher={Verein zur F{\"o}rderung des Open Access Publizierens in den Quantenwissenschaften}
}

@inproceedings{haner20175,
  title={5 petabyte simulation of a 45-qubit quantum circuit},
  author={H{\"a}ner, Thomas and Steiger, Damian S},
  booktitle={Proceedings of the International Conference for High Performance Computing, Networking, Storage and Analysis},
  pages={1--10},
  year={2017}
}

@article{huang2020classical,
  title={Classical simulation of quantum supremacy circuits},
  author={Huang, Cupjin and Zhang, Fang and Newman, Michael and Cai, Junjie and Gao, Xun and Tian, Zhengxiong and Wu, Junyin and Xu, Haihong and Yu, Huanjun and Yuan, Bo and others},
  journal={arXiv preprint arXiv:2005.06787},
  year={2020}
}

@article{li2019quantum,
  title={Quantum supremacy circuit simulation on Sunway TaihuLight},
  author={Li, Riling and Wu, Bujiao and Ying, Mingsheng and Sun, Xiaoming and Yang, Guangwen},
  journal={IEEE Transactions on Parallel and Distributed Systems},
  volume={31},
  number={4},
  pages={805--816},
  year={2019},
  publisher={IEEE}
}

@article{pednault2017breaking,
  title={Breaking the 49-qubit barrier in the simulation of quantum circuits},
  author={Pednault, Edwin and Gunnels, John A and Nannicini, Giacomo and Horesh, Lior and Magerlein, Thomas and Solomonik, Edgar and Wisnieff, Robert},
  journal={arXiv preprint arXiv:1710.05867},
  volume={15},
  year={2017}
}

@article{villalonga2019flexible,
  title={A flexible high-performance simulator for verifying and benchmarking quantum circuits implemented on real hardware},
  author={Villalonga, Benjamin and Boixo, Sergio and Nelson, Bron and Henze, Christopher and Rieffel, Eleanor and Biswas, Rupak and Mandr{\`a}, Salvatore},
  journal={npj Quantum Information},
  volume={5},
  number={1},
  pages={1--16},
  year={2019},
  publisher={Nature Publishing Group}
}

@article{PhysRevLett.128.030501,
  title = {Simulation of Quantum Circuits Using the Big-Batch Tensor Network Method},
  author = {Pan, Feng and Zhang, Pan},
  journal = {Phys. Rev. Lett.},
  volume = {128},
  issue = {3},
  pages = {030501},
  numpages = {6},
  year = {2022},
  month = {1},
  publisher = {American Physical Society},
  doi = {10.1103/PhysRevLett.128.030501},
  url = {https://link.aps.org/doi/10.1103/PhysRevLett.128.030501}
}

@inproceedings{10.1109/DAC18074.2021.9586191,
author = {Tsai, Yuan-Hung and Jiang, Jie-Hong R. and Jhang, Chiao-Shan},
title = {Bit-Slicing the Hilbert Space: Scaling Up Accurate Quantum Circuit Simulation},
year = {2021},
publisher = {IEEE Press},
url = {https://doi.org/10.1109/DAC18074.2021.9586191},
doi = {10.1109/DAC18074.2021.9586191},
abstract = {Recent advancements in quantum technologies shed light on viable quantum computation in near future. Quantum circuit simulation plays a key role in the toolchain of quantum hardware and software development. Due to the enormous Hilbert space of quantum states, simulating quantum circuits with classical computers is notoriously challenging. This work enhances quantum circuit simulation in two dimensions: accuracy (by representing complex numbers algebraically) and scalability (by bit-slicing number representation and achieving matrix-vector multiplication with symbolic Boolean manipulation). Experiments demonstrate the superiority of our method to the state-of-the-art tools over various quantum circuits with up to tens of thousands of qubits.},
booktitle = {2021 58th ACM/IEEE Design Automation Conference (DAC)},
pages = {439–444},
numpages = {6},
location = {San Francisco, CA, USA}
}

@article{zulehner2018advanced,
  title={Advanced simulation of quantum computations},
  author={Zulehner, Alwin and Wille, Robert},
  journal={IEEE Transactions on Computer-Aided Design of Integrated Circuits and Systems},
  volume={38},
  number={5},
  pages={848--859},
  year={2018},
  publisher={IEEE}
}

@inproceedings{hillmich2021accurate,
  title={As accurate as needed, as efficient as possible: Approximations in dd-based quantum circuit simulation},
  author={Hillmich, Stefan and Kueng, Richard and Markov, Igor L and Wille, Robert},
  booktitle={2021 Design, Automation \& Test in Europe Conference \& Exhibition (DATE)},
  pages={188--193},
  year={2021},
  organization={IEEE}
}

@misc{roberts2019tensornetwork,
      title={TensorNetwork: A Library for Physics and Machine Learning}, 
      author={Chase Roberts and Ashley Milsted and Martin Ganahl and Adam Zalcman and Bruce Fontaine and Yijian Zou and Jack Hidary and Guifre Vidal and Stefan Leichenauer},
      year={2019},
      eprint={1905.01330},
      archivePrefix={arXiv},
      primaryClass={physics.comp-ph}
}

@book{Nielsen2010,
  title={Quantum computation and quantum information},
  author={Nielsen, Michael A and Chuang, Isaac L},
  year={2010},
  publisher={Cambridge university press}
}

@inproceedings{fang2008fault,
  title={Fault detection for single and multiple missing-gate faults in reversible circuits},
  author={Fang-ying, Xiao and Han-wu, Chen and Wen-jie, Liu and Zhi-giang, Li},
  booktitle={2008 IEEE Congress on Evolutionary Computation (IEEE World Congress on Computational Intelligence)},
  pages={131--135},
  year={2008},
  organization={IEEE}
}

@inproceedings{paler2012detection,
  title={Detection and diagnosis of faulty quantum circuits},
  author={Paler, Alexandru and Polian, Ilia and Hayes, John P},
  booktitle={17th Asia and South Pacific Design Automation Conference},
  pages={181--186},
  year={2012},
  organization={IEEE}
}

@article{bera2017detection,
  title={Detection and diagnosis of single faults in quantum circuits},
  author={Bera, Debajyoti},
  journal={IEEE Transactions on Computer-Aided Design of Integrated Circuits and Systems},
  volume={37},
  number={3},
  pages={587--600},
  year={2017},
  publisher={IEEE}
}

@article{49058,
title	= {Quantum Approximate Optimization of Non-Planar Graph Problems on a Planar Superconducting Processor},
author	= {Matthew Harrigan and Kevin Jeffery Sung and Kevin Satzinger and Matthew Neeley and Frank Carlton Arute and Kunal Arya and Juan Atalaya and Joseph Bardin and Rami Barends and Sergio Boixo and Michael Blythe Broughton and Bob Benjamin Buckley and David A Buell and Brian Burkett and Nicholas Bushnell and Jimmy Chen and Yu Chen and Ben Chiaro and Roberto Collins and William Courtney and Sean Demura and Andrew Dunsworth and Austin Fowler and Brooks Riley Foxen and Craig Michael Gidney and Marissa Giustina and Rob Graff and Steve Habegger and Sabrina Hong and Lev Ioffe and Sergei Isakov and Evan Jeffrey and Zhang Jiang and Cody Jones and Dvir Kafri and Kostyantyn Kechedzhi and Julian Kelly and Seon Kim and Paul Klimov and Alexander Korotkov and Fedor Kostritsa and Dave Landhuis and Pavel Laptev and Martin Leib and Mike Lindmark and Orion Martin and John Martinis and Jarrod Ryan McClean and Matthew McEwen and Anthony Megrant and Xiao Mi and Masoud Mohseni and Wojtek Mruczkiewicz and Josh Mutus and Ofer Naaman and Charles Neill and Florian Neukart and Murphy Yuezhen Niu and Thomas E O'Brien and Bryan O'Gorman and A.G. Petukhov and Harry Putterman and Chris Quintana and Pedram Roushan and Nicholas Rubin and Daniel Sank and Andrea Skolik and Vadim Smelyanskiy and Doug Strain and Michael Streif and Marco Szalay and Amit Vainsencher and Ted White and Jamie Yao and Adam Zalcman and Leo Zhou and Hartmut Neven and Dave Bacon and Erik Lucero and Edward Farhi and Ryan Babbush},
year	= {2021},
URL	= {https://www.nature.com/articles/s41567-020-01105-y},
journal	= {Nature Physics}
}

@article{
doi:10.1126/science.abb9811,
author = {{Google AI Quantum and Collaborators} and Frank Arute  and Kunal Arya  and Ryan Babbush  and Dave Bacon  and Joseph C. Bardin  and Rami Barends  and Sergio Boixo  and Michael Broughton  and Bob B. Buckley  and David A. Buell  and Brian Burkett  and Nicholas Bushnell  and Yu Chen  and Zijun Chen  and Benjamin Chiaro  and Roberto Collins  and William Courtney  and Sean Demura  and Andrew Dunsworth  and Edward Farhi  and Austin Fowler  and Brooks Foxen  and Craig Gidney  and Marissa Giustina  and Rob Graff  and Steve Habegger  and Matthew P. Harrigan  and Alan Ho  and Sabrina Hong  and Trent Huang  and William J. Huggins  and Lev Ioffe  and Sergei V. Isakov  and Evan Jeffrey  and Zhang Jiang  and Cody Jones  and Dvir Kafri  and Kostyantyn Kechedzhi  and Julian Kelly  and Seon Kim  and Paul V. Klimov  and Alexander Korotkov  and Fedor Kostritsa  and David Landhuis  and Pavel Laptev  and Mike Lindmark  and Erik Lucero  and Orion Martin  and John M. Martinis  and Jarrod R. McClean  and Matt McEwen  and Anthony Megrant  and Xiao Mi  and Masoud Mohseni  and Wojciech Mruczkiewicz  and Josh Mutus  and Ofer Naaman  and Matthew Neeley  and Charles Neill  and Hartmut Neven  and Murphy Yuezhen Niu  and Thomas E. O’Brien  and Eric Ostby  and Andre Petukhov  and Harald Putterman  and Chris Quintana  and Pedram Roushan  and Nicholas C. Rubin  and Daniel Sank  and Kevin J. Satzinger  and Vadim Smelyanskiy  and Doug Strain  and Kevin J. Sung  and Marco Szalay  and Tyler Y. Takeshita  and Amit Vainsencher  and Theodore White  and Nathan Wiebe  and Z. Jamie Yao  and Ping Yeh  and Adam Zalcman },
title = {Hartree-Fock on a superconducting qubit quantum computer},
journal = {Science},
volume = {369},
number = {6507},
pages = {1084-1089},
year = {2020},
doi = {10.1126/science.abb9811},
URL = {https://www.science.org/doi/abs/10.1126/science.abb9811},
eprint = {https://www.science.org/doi/pdf/10.1126/science.abb9811},
abstract = {Accurate electronic structure calculations are considered one of the most anticipated applications of quantum computing that will revolutionize theoretical chemistry and other related fields. Using the Google Sycamore quantum processor, Google AI Quantum and collaborators performed a variational quantum eigensolver (VQE) simulation of two intermediate-scale chemistry problems: the binding energy of hydrogen chains (as large as H12) and the isomerization mechanism of diazene (see the Perspective by Yuan). The simulations were performed on up to 12 qubits, involving up to 72 two-qubit gates, and show that it is possible to achieve chemical accuracy when VQE is combined with error mitigation strategies. The key building blocks of the proposed VQE algorithm are potentially scalable to larger systems that cannot be simulated classically. Science, this issue p. 1084; see also p. 1054 Accurate quantum simulations of chemistry are performed using up to 12 superconducting qubits and 72 two-qubit gates. The simulation of fermionic systems is among the most anticipated applications of quantum computing. We performed several quantum simulations of chemistry with up to one dozen qubits, including modeling the isomerization mechanism of diazene. We also demonstrated error-mitigation strategies based on N-representability that dramatically improve the effective fidelity of our experiments. Our parameterized ansatz circuits realized the Givens rotation approach to noninteracting fermion evolution, which we variationally optimized to prepare the Hartree-Fock wave function. This ubiquitous algorithmic primitive is classically tractable to simulate yet still generates highly entangled states over the computational basis, which allowed us to assess the performance of our hardware and establish a foundation for scaling up correlated quantum chemistry simulations.}}

@article{46227,
title	= {Characterizing Quantum Supremacy in Near-Term Devices},
author	= {Sergio Boixo and Sergei Isakov and Vadim Smelyanskiy and Ryan Babbush and Nan Ding and Zhang Jiang and Michael J. Bremner and John Martinis and Hartmut Neven},
year	= {2018},
URL	= {https://www.nature.com/articles/s41567-018-0124-x},
journal	= {Nature Physics},
pages	= {595–600},
volume	= {14}
}

@article{hongxinTDD,
author = {Hong, Xin and Zhou, Xiangzhen and Li, Sanjiang and Feng, Yuan and Ying, Mingsheng},
title = {A Tensor Network Based Decision Diagram for Representation of Quantum Circuits},
year = {2022},
issue_date = {November 2022},
publisher = {Association for Computing Machinery},
address = {New York, NY, USA},
volume = {27},
number = {6},
issn = {1084-4309},
url = {https://doi.org/10.1145/3514355},
doi = {10.1145/3514355},
abstract = {Tensor networks have been successfully applied in simulation of quantum physical systems for decades. Recently, they have also been employed in classical simulation of quantum computing, in particular, random quantum circuits. This article proposes a decision diagram style data structure, called Tensor Decision Diagram (TDD), for more principled and convenient applications of tensor networks. This new data structure provides a compact and canonical representation for quantum circuits. By exploiting circuit partition, the TDD of a quantum circuit can be computed efficiently. Furthermore, we show that the operations of tensor networks essential in their applications (e.g., addition and contraction) can also be implemented efficiently in TDDs. A proof-of-concept implementation of TDDs is presented and its efficiency is evaluated on a set of benchmark quantum circuits. It is expected that TDDs will play an important role in various design automation tasks related to quantum circuits, including but not limited to equivalence checking, error detection, synthesis, simulation, and verification.},
journal = {ACM Trans. Des. Autom. Electron. Syst.},
month = {6},
articleno = {60},
numpages = {30},
keywords = {Tensor network, decision diagram}
}

@INPROCEEDINGS{hong2021approximate,  author={Hong, Xin and Ying, Mingsheng and Feng, Yuan and Zhou, Xiangzhen and Li, Sanjiang},  booktitle={2021 58th ACM/IEEE Design Automation Conference (DAC)},   title={Approximate Equivalence Checking of Noisy Quantum Circuits},   year={2021},  volume={},  number={},  pages={637-642},  doi={10.1109/DAC18074.2021.9586214}
}

@ARTICLE{ddsim_noise,  author={Grurl, Thomas and Fuß, Jürgen and Wille, Robert},  journal={IEEE Transactions on Computer-Aided Design of Integrated Circuits and Systems},   title={Noise-aware Quantum Circuit Simulation With Decision Diagrams},   year={2022},  volume={},  number={},  pages={1-1},  doi={10.1109/TCAD.2022.3182628}}

@article{huang2022fault,
  title={Fault Models in Superconducting quantum circuits},
  author={Huang, Qifan and Li, Boxi and Gao, Minbo and Ying, Mingsheng},
  journal={arXiv preprint arXiv:2212.00337},
  year={2022}
}

@article{isakov2021simulations,
  title={Simulations of quantum circuits with approximate noise using qsim and cirq},
  author={Isakov, Sergei V and Kafri, Dvir and Martin, Orion and Heidweiller, Catherine Vollgraff and Mruczkiewicz, Wojciech and Harrigan, Matthew P and Rubin, Nicholas C and Thomson, Ross and Broughton, Michael and Kissell, Kevin and others},
  journal={arXiv preprint arXiv:2111.02396},
  year={2021}
}

@article{zhou2020limits,
  title={What limits the simulation of quantum computers?},
  author={Zhou, Yiqing and Stoudenmire, E Miles and Waintal, Xavier},
  journal={Physical Review X},
  volume={10},
  number={4},
  pages={041038},
  year={2020},
  publisher={APS}
}

@phdthesis{woolfe2015matrix,
  title={Matrix product operator simulations of quantum algorithms},
  author={Woolfe, Kieran},
  year={2015},
  school={University of Melbourne, School of Physics}
}

@article{noh2020efficient,
  title={Efficient classical simulation of noisy random quantum circuits in one dimension},
  author={Noh, Kyungjoo and Jiang, Liang and Fefferman, Bill},
  journal={Quantum},
  volume={4},
  pages={318},
  year={2020},
  publisher={Verein zur F{\"o}rderung des Open Access Publizierens in den Quantenwissenschaften}
}

@article{cheng2021simulating,
  title={Simulating noisy quantum circuits with matrix product density operators},
  author={Cheng, Song and Cao, Chenfeng and Zhang, Chao and Liu, Yongxiang and Hou, Shi-Yao and Xu, Pengxiang and Zeng, Bei},
  journal={Physical Review Research},
  volume={3},
  number={2},
  pages={023005},
  year={2021},
  publisher={APS}
}

@article{eckart1936approximation,
  title={The approximation of one matrix by another of lower rank},
  author={Eckart, Carl and Young, Gale},
  journal={Psychometrika},
  volume={1},
  number={3},
  pages={211--218},
  year={1936},
  publisher={Springer}
}

@online{ibm2022quantum,
    author = {IBM},
    title = {IBM Unveils 400 Qubit-Plus Quantum Processor and Next Generation IBM Quantum System Two},
    year = {2022},
    url = {https://newsroom.ibm.com/2022-11-09-IBM-Unveils-400-Qubit-Plus-Quantum-Processor-and-Next-Generation-IBM-Quantum-System-Two},
    publisher = {IBM Newsroom},
    note = {Accessed: 2023-05-22}
}

@misc{gambetta2023hardware,
	title={The hardware and software for the era of quantum utility is here},
	author={Gambetta, Jay},
	year={2023},
	publisher={IBM. https://www. ibm. com/quantum/blog/quantum-roadmap-2033}
}

@article{ai2024quantum,
  title={Quantum error correction below the surface code threshold},
  author={AI, Google Quantum and others},
  journal={Nature},
  volume={638},
  number={8052},
  pages={920},
  year={2024}
}

@book{watrous2018theory,
  title={The theory of quantum information},
  author={Watrous, John},
  year={2018},
  publisher={Cambridge university press}
}

@software{quantum_ai_team_and_collaborators_2020_4091470,
  author       = {{Quantum AI team and collaborators}},
  title        = {ReCirq},
  month        = Oct,
  year         = 2020,
  publisher    = {Zenodo},
  doi          = {10.5281/zenodo.4091470},
  url          = {https://doi.org/10.5281/zenodo.4091470}
}

@inproceedings{niemann2014equivalence,
  title={Equivalence checking in multi-level quantum systems},
  author={Niemann, Philipp and Wille, Robert and Drechsler, Rolf},
  booktitle={Reversible Computation: 6th International Conference, RC 2014, Kyoto, Japan, July 10-11, 2014. Proceedings 6},
  pages={201--215},
  year={2014},
  organization={Springer}
}

@inproceedings{burgholzer2020improved,
  title={Improved DD-based equivalence checking of quantum circuits},
  author={Burgholzer, Lukas and Wille, Robert},
  booktitle={2020 25th Asia and South Pacific Design Automation Conference (ASP-DAC)},
  pages={127--132},
  year={2020},
  organization={IEEE}
}

@article{burgholzer2020advanced,
  title={Advanced equivalence checking for quantum circuits},
  author={Burgholzer, Lukas and Wille, Robert},
  journal={IEEE Transactions on Computer-Aided Design of Integrated Circuits and Systems},
  volume={40},
  number={9},
  pages={1810--1824},
  year={2020},
  publisher={IEEE}
}

@inproceedings{hong2022equivalence,
  title={Equivalence checking of dynamic quantum circuits},
  author={Hong, Xin and Feng, Yuan and Li, Sanjiang and Ying, Mingsheng},
  booktitle={Proceedings of the 41st IEEE/ACM International Conference on Computer-Aided Design},
  pages={1--8},
  year={2022}
}

@article{chen2022veriqbench,
  title={VeriQBench: A benchmark for multiple types of quantum circuits},
  author={Chen, Kean and Fang, Wang and Guan, Ji and Hong, Xin and Huang, Mingyu and Liu, Junyi and Wang, Qisheng and Ying, Mingsheng},
  journal={arXiv preprint arXiv:2206.10880},
  year={2022}
}

@article{chen2024automatic,
  title={Automatic test pattern generation for robust quantum circuit testing},
  author={Chen, Kean and Ying, Mingsheng},
  journal={ACM Transactions on Design Automation of Electronic Systems},
  volume={29},
  number={6},
  pages={1--36},
  year={2024},
  publisher={ACM New York, NY}
}

@article{gilchrist2005distance,
  title={Distance measures to compare real and ideal quantum processes},
  author={Gilchrist, Alexei and Langford, Nathan K and Nielsen, Michael A},
  journal={Physical Review A—Atomic, Molecular, and Optical Physics},
  volume={71},
  number={6},
  pages={062310},
  year={2005},
  publisher={APS}
}

@inproceedings{huang2024approximation,
  title={Approximation Algorithm for Noisy Quantum Circuit Simulation},
  author={Huang, Mingyu and Guan, Ji and Fang, Wang and Ying, Mingsheng},
  booktitle={2024 Design, Automation \& Test in Europe Conference \& Exhibition (DATE)},
  pages={1--6},
  year={2024},
  organization={IEEE}
}

@article{sarovar2020detecting,
  title={Detecting crosstalk errors in quantum information processors},
  author={Sarovar, Mohan and Proctor, Timothy and Rudinger, Kenneth and Young, Kevin and Nielsen, Erik and Blume-Kohout, Robin},
  journal={Quantum},
  volume={4},
  pages={321},
  year={2020},
  publisher={Verein zur F{\"o}rderung des Open Access Publizierens in den Quantenwissenschaften}
}

@article{breuer2016colloquium,
  title={Colloquium: Non-Markovian dynamics in open quantum systems},
  author={Breuer, Heinz-Peter and Laine, Elsi-Mari and Piilo, Jyrki and Vacchini, Bassano},
  journal={Reviews of Modern Physics},
  volume={88},
  number={2},
  pages={021002},
  year={2016},
  publisher={APS}
}
\begin{IEEEbiography}[{\includegraphics[width=1in,height=1.25in,clip,keepaspectratio]{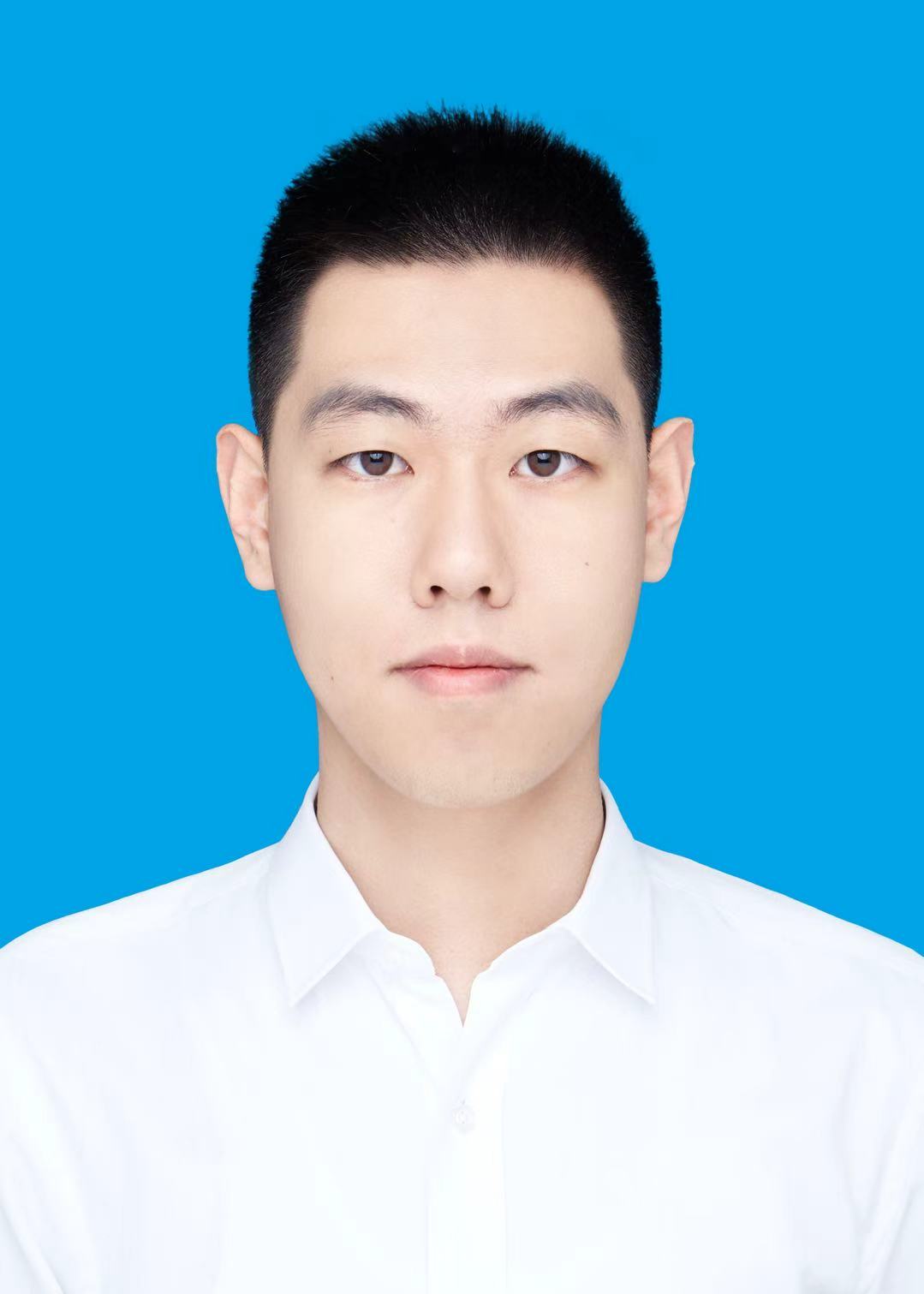}}]{Mingyu Huang}
received his BSc degree of Computer Science and Technology and Mathematics at Beijing Normal University in 2020. He is currently a Ph.D student at the Institute of Software, Chinese Academy of Sciences, advised by Prof. Mingsheng Ying. His research is about classical simulation of quantum circuits and quantum circuit optimization.
\end{IEEEbiography}

\vspace{-1cm}

\begin{IEEEbiography}[{\includegraphics[width=1in,height=1.25in,clip,keepaspectratio]{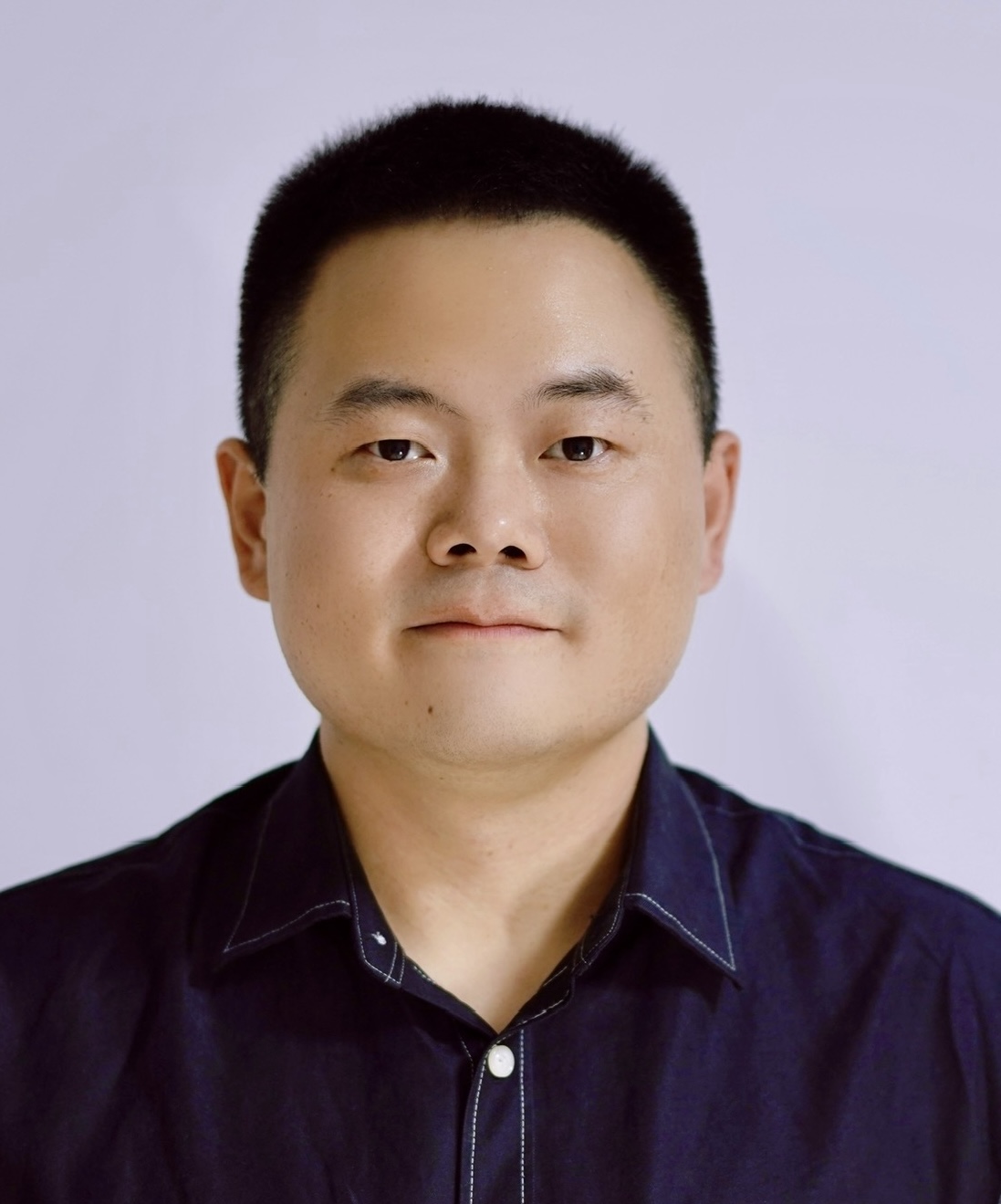}}]{Ji Guan}
	received the B.Sc. degree in mathematics from Sichuan University, China, in 2014, and the Ph.D. degree in computer science from the
	University of Technology Sydney, Australia, in 2018. He is currently a Faculty Researcher at the Institute of Software, Chinese Academy of Sciences. His primary research interests are trustworthy quantum machine learning and model checking quantum systems.
\end{IEEEbiography}

\vspace{-1cm}

\begin{IEEEbiography}
[{\includegraphics[width=1in,height=1.25in,clip,keepaspectratio]{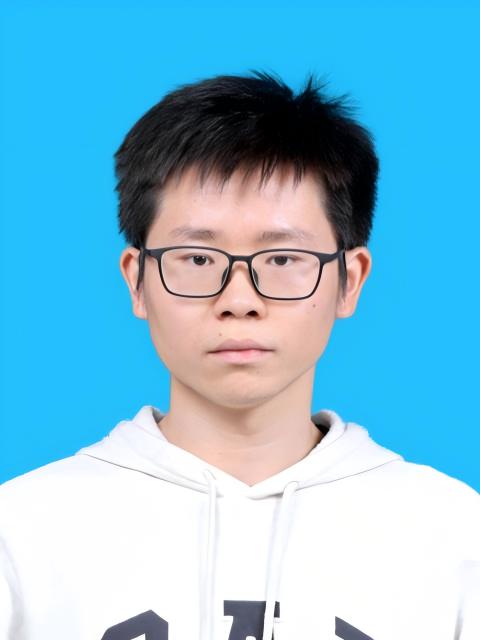}}]{Wang Fang}
received the B.Sc. degree in mathematics from Nanjing University, China, in 2018, and the Ph.D. degree in computer science from the Institute of Software, Chinese Academy of Sciences, China, in 2024. He is currently a Postdoctoral Research Associate at the School of Informatics, University of Edinburgh. His research interests include quantum programming languages, classical simulation of quantum circuits, and differentiable programming.
\end{IEEEbiography}

\vspace{-1cm}

\begin{IEEEbiography}
[{\includegraphics[width=1in,height=1.25in,clip,keepaspectratio]{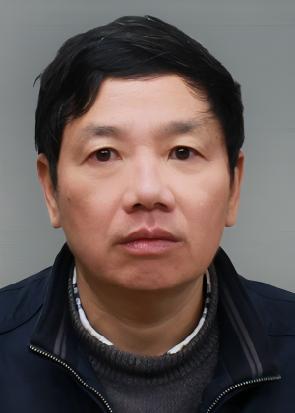}}]{Mingsheng Ying}
	received the Graduate degree from the Fuzhou Teachers College, Jiangxi, China, in 1981. He had been a Research Professor and the Deputy Director for Research of the Institute of Software, Chinese Academy of Sciences; and a Cheung Kong Professor with the Department of Computer Science and Technology, Tsinghua University, China. He is currently a Distinguished Professor at the Centre for Quantum Software and Information, University of Technology Sydney, Australia. He is the author of the books \textit{Model Checking Quantum Systems: Principles and Algorithms} (Cambridge University Press, 2021), \textit{Foundations of Quantum Programming} (Morgan Kaufmann, 2016), and \textit{Topology in Process Calculus: Approximate Correctness and Infinite Evolution of Concurrent Programs} (Springer-Verlag, 2001). His research interests are quantum computing, programming theory, and logics in artificial intelligence. Currently, he serves as the Co-Editor-in-Chief of ACM Transactions on Quantum Computing.
\end{IEEEbiography}

\end{document}